\documentclass[12pt]{article}
\usepackage{amssymb,amsmath,amsthm,amsfonts,eurosym,geometry,ulem,graphicx,color,setspace,sectsty,comment,footmisc,pdflscape,array,titlesec,hyperref,multicol,multirow,lscape,natbib,tikz,algorithm,algpseudocode,subcaption}
\usepackage[toc,page]{appendix}

\usepackage{indentfirst}
\normalem
\bibliographystyle{chicago}

\onehalfspacing

\newtheorem{theorem}{Theorem}

\hypersetup{colorlinks=true,linkcolor=blue,citecolor=blue,urlcolor=blue}

\newcolumntype{L}[1]{>{\raggedright\let\newline\\arraybackslash\hspace{0pt}}m{#1}}
\newcolumntype{C}[1]{>{\centering\let\newline\\arraybackslash\hspace{0pt}}m{#1}}
\newcolumntype{R}[1]{>{\raggedleft\let\newline\\arraybackslash\hspace{0pt}}m{#1}}

\geometry{left=0.97in,right=0.97in,top=0.95in,bottom=0.95in}
\hyphenation{mo-del}
\setlength{\footnotesep}{11.0pt}
\doublespacing
\interfootnotelinepenalty=10000  

\title{Trading Graph Neural Network}
\author{Xian Wu\thanks{University of Wisconsin - Madison, Department of Economics. Email: xwu394@wisc.edu. }}
\date{\today}

\begin{document}

\maketitle
\begin{abstract} This paper proposes a new algorithm -- Trading Graph Neural Network (TGNN) that can structurally estimate the impact of asset features, dealer features and relationship features on asset prices in trading networks.  It combines the strength of the traditional simulated method of moments (SMM) and recent machine learning techniques -- Graph Neural Network (GNN). It outperforms existing reduced-form methods with network centrality measures in prediction accuracy. The method can be used on networks with any structure, allowing for heterogeneity among both traders and assets.
\end{abstract}
\clearpage
\section{Introduction}
\par Many financial markets are organized as trading networks with dealers and customers. Dealers' position in the trading network is shown to have a significant impact on asset prices.\footnote{For example, \cite{di2017value} show that systemically important dealers use their network position to their advantage by charging higher markups to peripheral dealers and clients than to other core dealers. \cite{hollifield2017bid} find that core dealers receive relatively lower and less dispersed spreads than peripheral dealers in securitization markets. \cite{li2019dealer} find that core dealers charge customers up to twice the round-trip markups compared to peripheral dealers in municipal bond markets.} However, it remains challenging to account for the structure of trading networks during the estimation of dealer and asset features' impact on asset prices. Structural approaches usually rely on specific network structures to reduce complexity in estimation 
 \citep[e.g.][]{pinter2022comparing,eisfeldt2023otc,cohen2024inventory}, which limits the accuracy and generalizability of the estimation method. Reduced-form approach uses centrality measures to capture dealers' position in the network\citep[e.g.][]{di2017value,hollifield2017bid,li2019dealer}, but recent papers point out linear regressions with centrality measures can lead to biased estimation when the network is sparse \citep{cai2022linear}. This paper addresses these limitations with a novel structural estimation method -- Trading Graph Neural Network (TGNN). It disentangles the impact of asset, trader features, and traders' relationships on asset prices in arbitrary networks. 
 \par The Trading Graph Neural Network (TGNN) builds on a parsimonious trading network model with dealers as nodes and dealers' connections as edges. Each dealer's value is the difference between their maximum resale price and its holding cost. The potential price each dealer seller can get from another dealer buyer is the average of their values weighted by their bargaining powers based on the Nash-bargaining solution. The dealers have the outside option to sell to their customers if they get lower prices in the interdealer network. We show that the model has unique bounded solution that can be found by applying a contraction mapping iteratively given customer values, dealer bargaining powers, and holding costs, which can be parameterized and estimated with data. 
 \par TGNN estimates the model by identifying the impact of asset features, dealers' features and their relationship on the customers values, dealer bargaining powers and holding costs. We start with random initial parameters on asset features, dealers' features, and their relationship to simulate traders' values and find the predicted prices with the parameterized model. We compute the mean squared error loss by comparing the predicted prices and observed prices. We then update the parameters with backpropagation, a technique commonly used in machine learning, repeat the above process until the loss is minimized. TGNN further provides confidence intervals of the estimated parameters with the bootstrap method.
 \par TGNN combines the strength of the traditional simulated method of moments (SMM) and recent machine learning techniques -- Graph Neural Network (GNN). TGNN resembles the classic SMM in the use of structural estimation. Compared with SMM, it improves the estimation efficiency by adopting machine learning techniques, and reduces potential bias in moment selections by directly comparing observable prices and estimated prices. TGNN adapted GNN with economic micro-foundations of its key message passing framework. In TGNN, the message passing process coincides with the contraction mapping to find the fixed point of dealers' values, and can be interpreted as rounds of request-for-quote and updates of dealers' values until convergence. 
 \par We provide several test cases to examine the performance of TGNN. The estimates of the parameters on the asset features, dealer features and relationship features with TGNN are close to the true parameters. It can recover the bargaining power, dealer holding cost, dealer values and potential transaction prices with high accuracy. It outperforms the traditional reduced-form approach with centrality measures in explaining the observed prices in dense, sparse random networks and core-periphery networks. 
 \par TGNN enables structural estimation from transaction data, linking OTC market theory with empirical analysis. It captures heterogeneity in trading relationships and produces economically interpretable parameters for counterfactual analysis. It can be applied in analysis of traditional OTC markets like bonds or interbank lending, or decentralized digital markets such as crypto trading and peer-to-peer lending to identify key participants, estimate bargaining power and holding costs, assess price formation, market power, and systemic risk. 

\paragraph{Literature} This paper is related to the literature studying the impact of asset, trader features and their relationship on asset prices in a trading network. The majority of empirical papers in this strand of literature take a reduced-form approach, using centrality measures to capture the impact of network structure \citep{di2017value,hollifield2017bid,li2019dealer}. Theoretical models with general network structure are rarely used for structural estimation, due to additional assumptions on network structures that may not hold in the data or difficulty in the estimation of information structure \citep{malamud2017decentralized,babus2018trading,rostek2025public}. Among the papers structurally estimating the impact of dealer features on asset prices, most papers adopt the search and matching framework \citep[e.g.][]{pinter2022comparing,cohen2024inventory}, i.e. assuming the trading network is pairwise bilateral. \cite{eisfeldt2023otc} evaluate how changes in network structure will affect the price of credit default swaps (CDS) in core-periphery networks. 
This paper contributes to the literature with a new approach to structurally estimate the impact of asset, trader features and their relationship in networks of arbitrary structures.
\par This paper is related to the literature of the Simulated Method of Moments (SMM). SMM estimates economic model parameters by matching simulated model moments to data moment \citep{mcfadden1989method,pakes1989simulation}. It is especially useful for models where analytical solutions are difficult. However, SMM can be computationally intensive with complex network models. This paper proposes an algorithm with machine learning techniques to improve the computation efficiency. Also, our method directly matches the predicted prices with the observed prices. This avoids choosing the moments which can affect the accuracy of the estimates.
\par  This paper is also related to the growing literature using the Graph Neural Network (GNN) to study social networks and financial networks. \cite{leung2022graph} apply GNN to adjust for high-dimensional network confounding in causal inference, say peer effects in selection into treatment in a network. \cite{wang2024graph} use GNN to estimate network heterogeneity, examine the average treatment effects and outcomes of counterfactual policies, and to select the initial recipients of program information in social networks. Both \cite{leung2022graph} and \cite{wang2024graph} focus on GNN's application in addressing confounding factors in causal inference in networks, but maintain the key element used in GNN -- the message-passing framework in its general form without providing economic foundations. \cite{brogaard2024attention} apply GNN to an inter-firm network to predict credit default swap (CDS) spreads. It finds that GNN with firm dynamics through networks can improve the prediction accuracy relative to traditional machine learning methods without edge features. While \cite{brogaard2024attention} use the attention-based GNN to infer the firms or inter-firm linkages with the highest impacts on CDS spreads, the message passing framework remains a black-box, creating difficulties in quantifying and interpreting the impacts of the firms' features. Compared with the existing papers, this paper micro-founds the message-passing framework in GNN with an economic model, and provides direct estimates of the impact of node and edge features with confidence intervals. 

\section{Model}\label{sec:model}
\par Let $G$ be a network with $n$ dealers, where each edge denotes a trading relationship. Denote the set of dealers that a dealer $i$ can sell to as $\mathcal{N}(i)$.
The dealer $i$'s value for the asset $k$ at time $t$ satisfies
\begin{equation}\label{eq:value}
    v_{ikt} =  - c_{ikt} + \max\{\max_{j\in \mathcal{N}(i)} \{p_{ijkt}\}_j ,u_{ikt}\}
\end{equation}
where $c_{ikt}>0$ is the cost for dealer $ i$ to hold the asset $k$ at time $t$. $u_{ikt}$ is the price for dealer $i$ to sell it to its customers, and we will refer to it as customer value.
\par $p_{ijkt}$ is the transaction price of the asset $k$ if the dealer $i$ and $j$ trade at time $t$. It follows the Nash-bargaining solution that 
\begin{equation}\label{eq:price}
    p_{ijkt} = \pi_{ijt}v_{ikt}+(1-\pi_{ijt})v_{jkt}.
\end{equation}
where $\pi_{ijt}\in(0,1)$ is the bargaining power of dealer $j$ to buy from dealer $i$ at time $t$. Intuitively, when the dealer $j$ has a larger bargaining power, the dealer $j$ has a larger share of the surplus of the trade $v_{jkt}-v_{ikt}$.
    

\begin{theorem}[Equilibrium Existence and Uniqueness]\label{thm:eqlm} There exists unique fixed point of equilibrium values $\{v_{ikt}^*\}_{i,k,t}$ given $\{c_{ikt},u_{ikt}\}_{i,k,t}$ and $\{\pi_{ijkt}\}_{i,j,k,t}$. And for each asset $k$ and date $t$, we can find the equilibrium value $\boldsymbol{v}_{kt}^*=(v_{1kt}^*, \dots, v_{nkt}^*) \in \mathbb{R}^n$ with the following iterative algorithm. We can start with an arbitrary element $\boldsymbol{v}_{kt}^{(0)}\in\mathbb{R}^n$, and define a sequence $\{\boldsymbol{v}_{kt}^{(n)}\}_{n\in \mathbb{N}}$ by $\boldsymbol{v}_{kt}^{(n)}=T(\boldsymbol{v}_{kt}^{(n-1)})$ for $n\geq 1$ where $T: \mathbb{R}^n \to \mathbb{R}^n$ is a contraction mapping whose $i^{th}$ component is $ T_i(\boldsymbol{v}) = -c_i + \max\left\{ \max_{j \in \mathcal{N}(i)} \left\{ \pi_{ijkt} v_{ikt} + (1 - \pi_{ijkt}) v_{jkt} \right\}, u_{ikt} \right\}$. Then $\lim_{n\rightarrow \infty} \boldsymbol{v}_{kt}^{(n)} = \boldsymbol{v}_{kt}^*$. 
\par The equilibrium values are bounded by $v_{min}^*\equiv\min_i\{v_{ikt}\}_{i,k,t}\geq \min_i\{u_{ikt}-c_{ikt}\}$ and $v_{max}^*\equiv\max_i\{v_{ikt}\}_{i,k,t}= \max_i\{u_{ikt}-c_{ikt}\}$.
\par In equilibrium, seller $i$ sells the asset $k$ at price $ p_{ikt}^*=\max\left\{ \max_{j \in \mathcal{N}(i)} \left\{ \pi_{ijkt} v_{ikt}^* + (1 - \pi_{ijkt}) v_{jkt}^* \right\}, u_{ikt} \right\}$ at time $t$, to counterparty $j=\arg\max_{j \in \mathcal{N}(i)} \left\{ \pi_{ijkt} v_{ikt}^* + (1 - \pi_{ijkt}) v_{jkt}^* \right\}$ if $p_{ikt}^*>u_{ikt}$, and to customers otherwise.
\end{theorem}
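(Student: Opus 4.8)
The plan is to fix an asset $k$ and date $t$, suppress the $k,t$ subscripts, and study the map $T:\mathbb{R}^n\to\mathbb{R}^n$ with components $T_i(\boldsymbol v)=-c_i+\max\{\max_{j\in\mathcal N(i)}[\pi_{ij}v_i+(1-\pi_{ij})v_j],\,u_i\}$, establishing in turn its order and Lipschitz structure, existence of and convergence to a fixed point, and finally uniqueness. First I would record the structural properties. The map is continuous (a composition of affine maps with the max operator) and monotone: if $\boldsymbol v\le\boldsymbol w$ componentwise then each convex combination $\pi_{ij}v_i+(1-\pi_{ij})v_j$ weakly increases, hence so does each $T_i$. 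It is non-expansive in the sup-norm, since $|\max\{a,c\}-\max\{b,c\}|\le|a-b|$ and $|\max_j a_j-\max_j b_j|\le\max_j|a_j-b_j|$ reduce the bound to $\max_j[\pi_{ij}|v_i-w_i|+(1-\pi_{ij})|v_j-w_j|]\le\|\boldsymbol v-\boldsymbol w\|_\infty$. Finally, for a scalar $a\ge 0$ one has the shift inequality $T(\boldsymbol v+a\mathbf 1)\le T(\boldsymbol v)+a\mathbf 1$.

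Here I would flag what I expect to be the main obstacle. The most direct route---verifying that $T$ is a sup-norm contraction and invoking Banach's theorem---does not quite go through, because the modulus above is exactly $1$ rather than strictly below it: on the region where no outside option binds, $T$ is affine with a row-stochastic linear part, which fixes the direction $\mathbf 1$ and so has sup-norm operator norm equal to one. Equivalently, the shift inequality yields Blackwell-type discounting only with modulus $\beta=1$. The substantive work therefore shifts to deriving existence and convergence from monotonicity, and uniqueness from the strict positivity of $c_i$ together with $\pi_{ij}<1$.

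For existence and convergence I would exploit monotonicity. Choosing $M\ge\max_i u_i$ gives $T_i(M\mathbf 1)=-c_i+\max\{M,u_i\}=M-c_i\le M$, so $T(M\mathbf 1)\le M\mathbf 1$, and symmetrically $T(-M\mathbf 1)\ge -M\mathbf 1$. By monotonicity the iterates $T^m(M\mathbf 1)$ decrease and stay bounded below, hence converge, and the limit is a fixed point by continuity; this proves existence (alternatively, since $[\min_i(u_i-c_i),\max_i(u_i-c_i)]^n$ is invariant, Tarski's theorem on that complete lattice gives existence directly). Once uniqueness is in hand, global convergence from an arbitrary $\boldsymbol v^{(0)}$ follows by sandwiching $-M\mathbf 1\le\boldsymbol v^{(0)}\le M\mathbf 1$, applying monotonicity to get $T^m(-M\mathbf 1)\le T^m(\boldsymbol v^{(0)})\le T^m(M\mathbf 1)$, and squeezing, as both bounding sequences converge monotonically to the unique fixed point.

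The crux is uniqueness. Suppose $\boldsymbol v^*$ and $\boldsymbol w^*$ are fixed points and set $\delta=\max_i(v_i^*-w_i^*)$; I aim to show $\delta\le 0$, which with symmetry forces $\boldsymbol v^*=\boldsymbol w^*$. Assume $\delta>0$ and let $S$ be the argmax set. For $i\in S$, if the outside option attained the max for $\boldsymbol v^*$ then $v_i^*=u_i-c_i\le w_i^*$, contradicting $\delta>0$; so the bargaining term binds, and writing $j$ for the maximizing neighbor, subtracting the two fixed-point relations and using $\pi_{ij}<1$ gives $\delta\le v_j^*-w_j^*$, hence $j\in S$. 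Moreover the binding equation rearranges to $v_i^*=v_j^*-c_i/(1-\pi_{ij})<v_j^*$, so following maximizing neighbors produces a path inside $S$ along which $v^*$ is strictly increasing; since the graph is finite this path must revisit a vertex, contradicting strict monotonicity along it, so $\delta\le 0$. The bounds then follow from the fixed-point identity: $v_i^*\ge u_i-c_i$ gives both $v^*_{\min}\ge\min_i(u_i-c_i)$ and $v^*_{\max}\ge\max_i(u_i-c_i)$, while at any maximizer $m$ of $\boldsymbol v^*$ the bargaining term cannot exceed $v^*_{\max}$, so with $c_m>0$ the outside option must bind there, yielding $v^*_{\max}=u_m-c_m\le\max_i(u_i-c_i)$ and hence equality. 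The price and counterparty characterization is immediate upon substituting $\boldsymbol v^*$ into \eqref{eq:value} and \eqref{eq:price}.
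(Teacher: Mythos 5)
Your proof is correct, and it takes a genuinely different---and in fact more careful---route than the paper's. The paper proves the theorem by asserting that $T$ is a sup-norm contraction with modulus $1-\varepsilon$, via the inequality $|p_{ij}-p_{ij}'|\leq \max\{\pi_{ij},1-\pi_{ij}\}\,\|\boldsymbol{v}-\boldsymbol{v}'\|_\infty$, and then invoking the Banach fixed point theorem. That inequality is not valid as stated: since $p_{ij}$ is a convex combination whose weights sum to one, the triangle inequality only yields $|p_{ij}-p_{ij}'|\leq \pi_{ij}|v_i-v_i'|+(1-\pi_{ij})|v_j-v_j'|\leq \|\boldsymbol{v}-\boldsymbol{v}'\|_\infty$, and taking $\boldsymbol{v}'=\boldsymbol{v}+a\mathbf{1}$ in a region where no outside option binds gives $\|T(\boldsymbol{v})-T(\boldsymbol{v}')\|_\infty=a$ exactly---precisely the obstruction you flag (a row-stochastic linear part fixing the direction $\mathbf{1}$, i.e.\ Blackwell discounting only with modulus $\beta=1$). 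Indeed the paper's own intermediate display $|T_i(\boldsymbol{v})-T_i(\boldsymbol{v}')|\leq \|\boldsymbol{v}-\boldsymbol{v}'\|_\infty$ is inconsistent with the $(1-\varepsilon)$ modulus asserted in the following line, so the paper's argument as written has a genuine gap that your structure repairs. Your substitute machinery---existence by monotone iteration from the super- and subsolutions $\pm M\mathbf{1}$ (or Tarski on the invariant box $[\min_i(u_i-c_i),\max_i(u_i-c_i)]^n$), uniqueness by following maximizing neighbors within the argmax set of $v_i^*-w_i^*$ and using $c_i>0$, $\pi_{ij}<1$ to produce a strictly increasing path that cannot cycle in a finite graph, and global convergence of the iteration by squeezing between the two monotone sequences---is complete, and your derivation of the bounds (in particular that $c_m>0$ forces the outside option to bind at the maximizer, giving $v^*_{\max}=\max_i(u_i-c_i)$) matches the paper's reasoning for that part. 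As for what each approach buys: the paper's contraction argument, were it valid, would deliver a geometric convergence rate for the message-passing iteration, while yours delivers convergence without a rate; on the other hand, your argument makes transparent that it is the strict positivity of holding costs, not $\pi_{ij}$ being bounded away from $0$ and $1$, that drives uniqueness. Two small points to tidy: handle $\mathcal{N}(i)=\emptyset$ explicitly (the outside option then binds trivially, so your first uniqueness case applies), and state the uniform lower bound $T_i(\boldsymbol{v})\geq u_i-c_i$ when you claim the decreasing iterates from $M\mathbf{1}$ are bounded below.
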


\paragraph{Remark} Theorem \ref{thm:eqlm} establishes the existence, uniqueness and finiteness of the equilibrium. It also provides a iterative contraction mapping to find the fixed point of dealer values. This allows us to simulate the equilibrium given the customer values, dealer holding costs and bargaining power. We can parameterize the customer values, dealer holding costs and bargaining power as functions of asset features, dealer features and relationship features. With the parameterization, we can structurally estimate the impact of these features on the customer values, dealer holding costs, bargaining power, and therefore asset prices. 

\paragraph{Parameterization} We assume that the holding cost satisfies
\begin{equation}\label{eq:cost}
    c_{ikt} = f(\boldsymbol{X}_{kt}'\boldsymbol{\beta}_x + \boldsymbol{Y}_{it}' \boldsymbol{\beta}_y + \varepsilon_{ikt})
\end{equation}
where $f(\cdot):\mathbb{R}\mapsto \mathbb{R}^+$, e.g. the exponential function. $\boldsymbol{X}_{kt}$ are asset features at time $t$, e.g. term to maturity, bond type or bond fixed effects. $\boldsymbol{Y}_{it}$ are dealer $i$'s features at time $t$, e.g. the inventory of investment grade bonds and high yield bonds at time $t$, dealer fixed effects and date fixed effects. 
\par The value to sell the asset to dealer $i$'s customer is 
\begin{equation} \label{eq:customer_val}
   u_{ikt}= g(\boldsymbol{X}_{kt}'\boldsymbol{\gamma}_x + \boldsymbol{Y}_{it}' \boldsymbol{\gamma}_y +\boldsymbol{Z}_{it}' \boldsymbol{\gamma}_z +\zeta_{ikt})
\end{equation}
\par where $g(\cdot):\mathbb{R}\mapsto \mathbb{R}^+$, e.g. the exponential function. $\boldsymbol{Z}_{it}$ are dealer-customer features, e.g. the number of past trades and past buy and sell volume with customers of dealer $i$ in a given period.   
\par The bargaining power between the two dealers satisfies 
\begin{equation}\label{eq:bargaining}
    \pi_{ijt}= \sigma(\boldsymbol{E}_{ijt}'\boldsymbol{\eta} +\epsilon_{ijt})
\end{equation}
where $\sigma(\cdot):\mathbb{R}\mapsto [0,1] $, e.g. the logistic function. $\boldsymbol{E}_{ijt}$ are relationship features, e.g. the number of past trades, and past trading volume between dealer $i$ and dealer $j$, edge and date fixed effects. 

\section{Estimation}

\paragraph{Estimate Customer Values}
\par Note that when an interdealer trade happens between dealer seller $i$ and buyer $j$, we typically do not observe a simultaneous sale of the same asset between dealer $i$ and its customers. We can first estimate the following equation with the transactions of dealer-to-customer sales, 
\begin{equation*}
p_{ikt}^c= g(\boldsymbol{X}_{kt}'\boldsymbol{\gamma}_x + \boldsymbol{Y}_{it}' \boldsymbol{\gamma}_y +\boldsymbol{Z}_{it}' \boldsymbol{\gamma}_z +\zeta_{ikt})
\end{equation*}
where $p_{ikt}^c$ is the price of asset $k$ from trader $i$ to its customer at time $t$.  Given the estimated parameters $(\hat{\boldsymbol{\gamma}}_x,\hat{\boldsymbol{\gamma}}_y,\hat{\boldsymbol{\gamma}}_z)$, we can estimate the latent customer value $\hat{u}_{ikt}=g(\boldsymbol{X}_{kt}'\hat{\boldsymbol{\gamma}}_x + \boldsymbol{Y}_{it}' \hat{\boldsymbol{\gamma}}_y +\hat{\boldsymbol{Z}}_{it}' \boldsymbol{\gamma}_z)$ when such trades are not observable.

\paragraph{Estimation with the Trading Network} Given the estimated customer value $\hat{u}_{ikt}$, we can estimate the other parameters with Algorithm \ref{alg:train}, Trading Graph Neural Network (TGNN).  The Trading Graph Neural Network (TGNN) estimates price formation and trading relationships in a financial market represented as a graph, with the underlying data generation process modeled as Section \ref{sec:model}. Given a set of dealers as nodes and trading relationships as edges, the model assigns initial values of dealer values and costs based on asset and trader features. It iteratively updates these values by passing the potential transaction prices information through the network, where transaction prices are computed as a weighted combination of dealer values based on bargaining power. Each dealer updates its value by selecting the best available price among its neighbors. The model is trained by minimizing the mean squared error between predicted and observed best prices, with optional regularization to avoid overfitting. Parameter updates are performed using gradient descent. This framework estimates the impact of asset features, trader features, and the trading relationship on asset prices, taking into account the trading network structure.
\par The algorithm draws inspiration from the Simulated Method of Moments (SMM) as well as the Graph Neural Network (GNN).
\par Simulated Method of Moments (SMM) is widely used for structural estimation of economic models \citep{mcfadden1989method,pakes1989simulation}. TGNN incorporates elements similar to SMM by explicitly modeling the data generation process as outlined in Section \ref{sec:model}, but differs in how it estimates parameters from observed data. While SMM involves simulating data from the model for different parameter values and matching statistical moments with observed data, TGNN directly minimizes the difference between predicted and observed prices. SMM often uses simulation-based search algorithms, sometimes with numerical approximations of gradients. It can be computationally intensive due to repeated simulations, especially with complex models. TGNN uses neural network optimization techniques like Adam, which can be more computationally efficient for graph-based complex models due to gradient-based optimization.
\par In recent years, the Graph Neural Network (GNN) has emerged as a powerful class of deep learning models for processing graph-structured data. Standard GNNs operate through an iterative message passing framework where node representations are updated by aggregating information from their neighbors. Formally, at iteration $t$, a node $v$ updates its representation $h_v^{(t)}$ according to the following equation,
\begin{equation}\label{eq:GNN}
    h_v^{(t)} = \text{UPDATE}\left(h_v^{(t-1)}, \text{AGGREGATE}\left(\{h_u^{(t-1)}: u \in \mathcal{N}(v)\}\right)\right)
\end{equation}
where $\mathcal{N}(v)$ denotes the neighbors of node $v$. The AGGREGATE function combines the features of a node’s neighbors, often using operations like sum, mean, or max after linear or nonlinear transformation of the neighboring nodes' features. The UPDATE function then combines this aggregated message with the node’s previous features to produce its new representation, typically through a neural network with non-linear activation.
\par TGNN shares this iterative information propagation structure in equation (\ref{eq:GNN}) but differs in several important aspects. Both models leverage the network topology to capture dependencies between entities and employ multiple iterations of message passing to refine representations. However, the TGNN replaces generic neural network components with economically-motivated functions, implementing a domain-specific message passing scheme derived from bargaining theory. While standard GNNs use learnable parameters that lack direct interpretation, TGNN explicitly models dealer costs, customer values, and bargaining powers. Most importantly, the message passing framework in equation (\ref{eq:GNN}) is micro-founded by Theorem \ref{thm:eqlm}. Node updates in our model follow economic maximization principles rather than generic transformations, as dealer values are updated by taking the maximum between customer value and the best available price from other dealers (see equation (\ref{eq:value})). The message passing framework resembles the contraction mapping $T(\cdot)$ in Theorem \ref{thm:eqlm}. It can be interpreted as iterated belief updates with rounds of request-for-quote from each seller to its potential buyer until the dealers' values converge. This economic structure improves interpretability while retaining the powerful representational capabilities of graph-based learning. Finally, general GNNs are usually used for prediction tasks, and can often sacrifice the interpretability of the model in exchange for the accuracy of prediction. However, TGNN focuses on the estimation of the parameters, the economic interpretation, and its potential applications in estimating counterfactuals that are interesting to economists, such as removing dealers from the market,  increasing dealer inventory cost, or restructuring the trading network.
\paragraph{Bootstrap Confidence Intervals} As we focus more on the structural estimation, we introduce the bootstrap method to calculate confidence intervals as illustrated in Algorithm \ref{alg:CI}, which are not typically used in general GNN.  
The bootstrap method, introduced by \cite{efron1992bootstrap}, is a computational resampling technique used to estimate the sampling distribution and quantify uncertainty of parameter estimates without relying on strong distributional assumptions. In our trading network application, we implement a modified bootstrap approach that preserves the network structure while resampling the observed prices. Specifically, for each bootstrap iteration $b = 1,2,\ldots,B$, we generate a bootstrap sample by resampling with replacement from the original set of observed prices. We then train our graph neural network model on this modified data to obtain bootstrap parameter estimates $\hat{\beta}_x^{(b)}$, $\hat{\beta}_y^{(b)}$, and $\hat{\eta}^{(b)}$. From the resulting empirical distribution of parameter estimates $\{\hat{\theta}^{(1)}, \hat{\theta}^{(2)}, \ldots, \hat{\theta}^{(B)}\}$, we calculate $(1-\alpha)$ confidence intervals using the percentile method, $\text{CI}_{1-\alpha}(\hat{\theta}) = [\hat{\theta}_{(\alpha/2)}, \hat{\theta}_{(1-\alpha/2)}]$, providing a measure of the statistical uncertainty in our parameter estimates that arises from the specific set of observed prices.
\par Using the bootstrap method to calculate the confidence interval has two benefits. First, it does not rely on the asymptotic normality of the estimator. Second, it is computationally stable than estimation with the asymptotic covariance matrix when the data generation function is not continuous \citep{mcfadden1989method,pakes1989simulation}. It is particularly useful in this model, as each trader's optimization involves discrete changes in values across its local network, and the data generation function can be non-differentiable. 

\begin{algorithm}
\caption{Trading Graph Neural Network (TGNN)}\label{alg:train}
\begin{algorithmic}[1]
\State \textbf{Input:} Graph $G = (\mathcal{V}, \mathcal{E})$ with asset features $\boldsymbol{X}$, dealer features $\boldsymbol{Y}$, relationship features $\boldsymbol{E}$, customer values $u$, observed best prices $p^{obs}$
\State \textbf{Parameters:} $\boldsymbol{\beta}_x$, $\boldsymbol{\beta}_y$, $\boldsymbol{\eta}$
\State \textbf{Hyperparameters:} Number of message passing iterations $L$, learning rate $\alpha$, regularization parameter $\lambda$
\State Initialize the parameters with small random values
\State \textbf{Forward Pass:}
\State Compute costs: $c_i = f(\boldsymbol{X}_i'\boldsymbol{\beta}_x +  \boldsymbol{Y}_i'\boldsymbol{\beta}_y)$ for all nodes $i \in \mathcal{V}$
\State Compute bargaining powers: $\pi_{ij} = \sigma( \boldsymbol{E}_{ij}'\boldsymbol{\eta})$ for all edges $(i,j) \in \mathcal{E}$
\State Initialize dealer values: $v_i^{(0)} = -c_i + u_i$ for all nodes $i \in \mathcal{E}$
\For{$l = 1$ to $L$}
    \State Compute transaction prices: $p_{ij}^{(l)} = \pi_{ij} \cdot v_i^{(l-1)} + (1-\pi_{ij}) \cdot v_j^{(l-1)}$ for all edges $(i,j) \in \mathcal{E}$
    \For{each node $i \in \mathcal{V}$}
        \State Find best price: $p_{i,best}^{(l)} = \max_{j \in \mathcal{N}_i} p_{ij}^{(l)}$ where $\mathcal{N}_i$ is the set of neighbors of $i$
        \State Update dealer value: $v_i^{(l)} = -c_i + \max\{u_i, p_{i,best}^{(l)}\}$
    \EndFor
\EndFor
\State Identify predicted best prices $p_{i,best}^{pred}$ for each dealer $i$
\State \textbf{Loss Computation:}
\State Compute Mean Squared Error (MSE) loss: $\mathcal{L}_{MSE} = \frac{1}{|\mathcal{V}|} \sum_{i \in V} (p_{i,best}^{pred} - p_{i,best}^{obs})^2$
\State Compute regularization (optional): $\mathcal{L}_{reg} = \lambda(\|\boldsymbol{\beta}_x\|_2^2 + \|\boldsymbol{\beta}_y\|_2^2 + \|\boldsymbol{\eta}\|_2^2)$
\State Total loss (optional): $\mathcal{L} = \mathcal{L}_{MSE} + \mathcal{L}_{reg}$

\State \textbf{Backward Pass:}
\State Compute gradients: $\nabla_{\beta_x}\mathcal{L}$, $\nabla_{\beta_y}\mathcal{L}$, $\nabla_{\eta}\mathcal{L}$
\State Update parameters:
\State $\boldsymbol{\beta}_x \leftarrow \boldsymbol{\beta}_x - \alpha \cdot \nabla_{\beta_x}\mathcal{L}$
\State $\boldsymbol{\beta}_y \leftarrow \boldsymbol{\beta}_y - \alpha \cdot \nabla_{\beta_y}\mathcal{L}$
\State $\boldsymbol{\eta}\leftarrow \boldsymbol{\eta} - \alpha \cdot \nabla_{\eta}\mathcal{L}$
\State \textbf{Return:} Updated parameters $\boldsymbol{\beta}_x$, $\boldsymbol{\beta}_y$, $\boldsymbol{\eta}$
\end{algorithmic}
\end{algorithm}

\begin{algorithm}
\caption{Bootstrap Method to Calculate Confidence Intervals}\label{alg:CI}
\begin{algorithmic}[1]
\State \textbf{Input:} Trained model $\hat{\theta}$, data $\mathcal{D}$, number of bootstrap samples $B$, significance level $\alpha$
\State \textbf{Output:} Confidence intervals for $\beta_x$, $\beta_y$, and $\eta$
\State Extract original parameter estimates $\hat{\beta}_x$, $\hat{\beta}_y$, $\hat{\eta}$ from $\hat{\theta}(\mathcal{D})$
\State Create bootstrap samples from data $\mathcal{D}$ by resampling observable prices while maintaining the network structure
\State Train the model on each bootstrap sample with the same architecture as $\hat{\theta}$
\State Collect estimated parameters $\mathcal{B}_{\beta_x}=\{\hat{\beta}_x^{(b)}\}_{b=1}^B$, $\mathcal{B}_{\beta_y}=\{\hat{\beta}_y^{(b)}\}_{b=1}^B$, and $\mathcal{B}_{\eta}=\{\hat{\eta}^{(b)}\}_{b=1}^B$

\For{each parameter $\phi \in \{\beta_x, \beta_y, \eta\}$}
    \State Compute bootstrap mean: $\bar{\phi} = \frac{1}{B}\sum_{b=1}^{B} \phi^{(b)}$
    \State Compute bootstrap standard error: $\text{SE}(\phi) = \sqrt{\frac{1}{B-1}\sum_{b=1}^{B}(\phi^{(b)} - \bar{\phi})^2}$
    \State Compute $(1-\alpha)$ confidence interval: 
    \State \hspace{1em} $\text{CI}_{\text{lower}}(\phi) = \text{Percentile}(\mathcal{B}_{\phi}, \alpha/2)$
    \State \hspace{1em} $\text{CI}_{\text{upper}}(\phi) = \text{Percentile}(\mathcal{B}_{\phi}, 1-\alpha/2)$
\EndFor
\State \textbf{Return:} Confidence intervals for each parameter dimension
\end{algorithmic}
\end{algorithm}

\section{Test Cases}
In this section, we provide test cases to evaluate the performance of our model in explaining the observed prices and recovering the true parameters. 

\subsection{Network Structure and Synthetic Data Generation}

Our trading network model represents an over-the-counter market with dealers trading assets over multiple days. In this test case, the network is constructed as follows:

\paragraph{Network Dimensions:} We create a synthetic trading environment with 10 dealers ($N_i = 10$) and 2 different assets ($N_k = 2$) observed over 6 trading days ($N_t = 5$).  For simplicity, we assume that the asset features $\boldsymbol{X}$, dealer features $\boldsymbol{Y}$ and trading relationship $\boldsymbol{E}$ are 1-dimensional.  We allow the network to change over time to highlight the flexibility of the algorithm, but this step is not necessary in applications. 

\paragraph{Node Generation:} The sample contains a total of $N_i \times N_k \times N_t = 10 \times 2 \times 5 = 100$ nodes, where each node represents a dealer-asset-day combination. Each node $i$ is assigned index $ikt$ identifying which dealer, asset, and day it represents. Node features are generated as follows. Each asset $k$ at time $t$ has feature $X_{kt} \sim \mathcal{N}(0, 1)$ are drawn from standard normal distributions. Each dealer $i$ at time $t$ has feature $Y_{it} \sim \mathcal{N}(0, 1)$ are drawn from a standard normal distribution. For simplicity, we assume $X_{kt}$ are independent across assets and time, $Y_{it}$ are independent across dealer and time, and $X_{kt}$ and $Y_{it}$ are independent of each other. Customer values $u$: We skip the estimation of customer value and assume $u_{ikt}$ are drawn from a log-normal distribution $exp(Z_{ikt}+5)$ where $Z_{ikt}\sim \mathcal{N}(0,0.01)$. 
\paragraph{Edge Structure:} Edges represent potential trading relationships between dealers for the same asset on the same day. For each day $t$ and asset $k$, we consider all ordered pairs of dealers $(i,j)$ where $i \neq j$. We generate Erdos–Rényi (ER) random graphs where each potential edge is included with probability 70\%, resulting in an undirected sparse network where each dealer can potentially trade with multiple other dealers. Each edge $(i,j)$ is assigned relationship features drawn from independent standard normal distribution $E_{ij} \sim \mathcal{N}(0, 1)$. 

\paragraph{Latent Variables:} Based on the network structure and features, we generate key economic variables according to our model:
\begin{itemize}
    \item Dealer costs $c$: Following equation (\ref{eq:cost}), let $c_{ikt} = \exp(X_{kt}'\beta_x + Y_{it}'\beta_y + \epsilon_{ikt})$ where $\epsilon_{ikt} \sim \mathcal{N}(0, 0.01)$ represents a small noise term and the true parameters are $\beta_x = \beta_y = 1$. 
    \item Bargaining powers $\pi$: Following equation (\ref{eq:bargaining}), $\pi_{ij} = \sigma(E_{ij}'\eta + \nu_{ij})$, where $\sigma$ is a logistic function, $\nu_{ij} \sim \mathcal{N}(0, 0.01)$ is a small noise term, and the true parameter is $\eta = 1$.
\end{itemize}

\paragraph{Dealer Value and Price Determination:} In step 1, For each dealer-asset-day combination, we initiate each dealer's value for the asset as the sum of the cost and the customer value,
$$
v_{ikt} = -c_{ikt} + u_{ikt}.
$$
\par In step 2, we calculate the potential transaction price between connected dealers following equation (\ref{eq:price}). In step 3, given the transaction price $\{p_{ijkt}\}_{ijkt}$, we can update the dealer value $v_{ikt}$ according to equation (\ref{eq:value}). In step 4, we repeat step 2-3 for $L$ times until the changes in $\{v_{ikt}\}_{ikt}$ are sufficiently small. In this exercise, we take the number of iterations $L$ to be $10$. Appendix Figure \ref{fig:value_changes_dense} shows the evolution of the dealer values. Finally, for each dealer, asset and day, we identify and record the best (highest) price offered by any potential counterparty. This observed best price is what the model uses for parameter estimation, reflecting the missing data of unrealized trades in real trading networks.
\medskip 
\par Figure \ref{fig:test_network} shows the structure of interdealer networks generated for each asset on each day. The blue lines indicate the edges of the trading networks. The red lines indicate the observable prices and trading direction. The buyer and seller can change across assets from day to day given different asset and trader features. It's less likely to have interdealer trades when the customer values are high enough and the cost to hold the asset is high for dealers. 

\begin{figure}[htbp]
    \includegraphics[width=\linewidth]{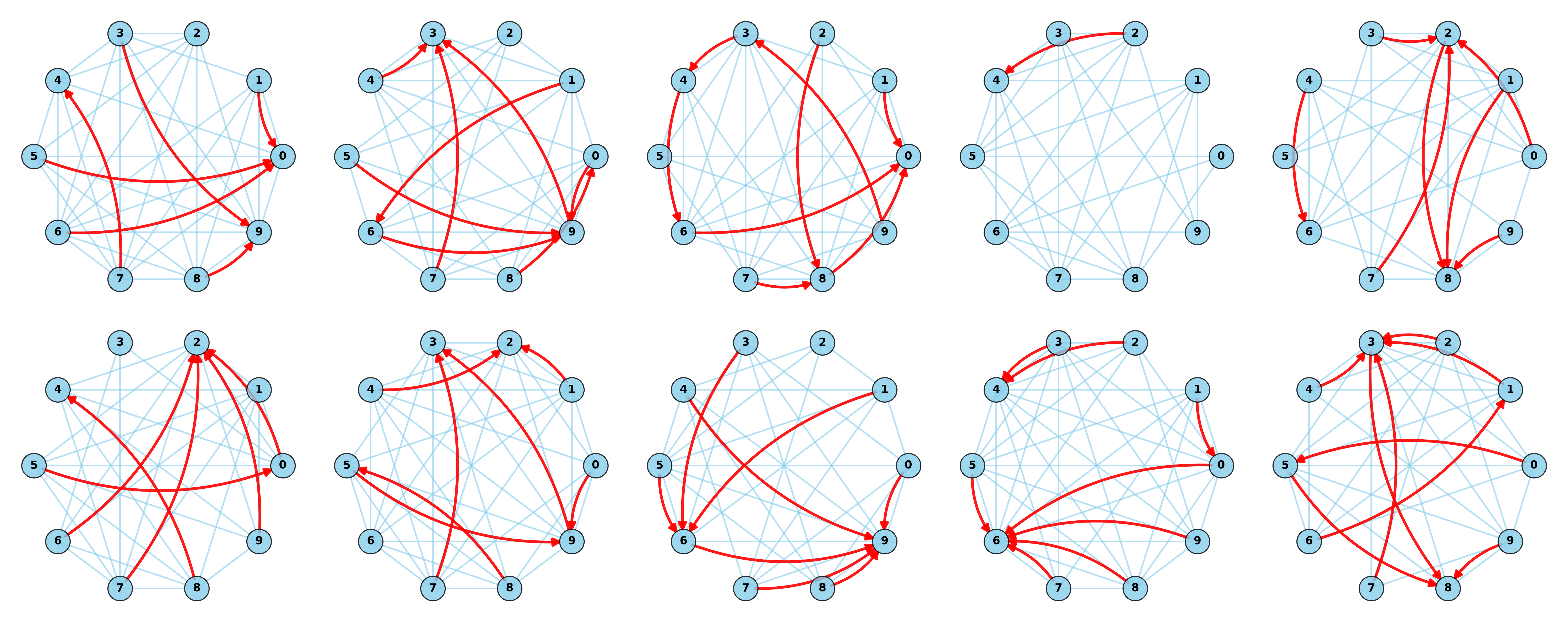}
    \caption{The Structure of Dense Random Networks}
    \label{fig:test_network}
     \footnotesize{\textit{Note:} This figure shows the structure of interdealer networks generated for 10 dealers, 2 assets over 5 days with 622 edges. The blue lines indicate the directed edges, where the edge is directed from seller to buyer. The red arrows denote observable trades, with the arrowheads indicating the direction toward the buyer. \raggedright}
\end{figure}   

\par Table \ref{tab:summary_stats_dense} shows the summary of statistics. Panel A presents the statistics of observable variables that will be used to identify the parameters with TGNN. Panel A presents the statistics of unobservable latent variables in the data generation process. 

\begin{table}[htbp]
\caption{Summary Statistics in Dense Random Networks}
\begin{center}
    \begin{tabular}{lccccc}
\hline
Variable & N & Min & Max & Mean & Std \\ \hline
\textit{Panel A. Observable Variables}\\
\hline
Asset Feature X & 100 & -1.1229 & 2.2082 & 0.1920 & 0.8205 \\ 
Dealer Feature Y & 100 & -2.2064 & 2.8140 & 0.0969 & 0.9887 \\ 
Relationship Feature E & 622 & -2.4396 & 2.7421 & -0.0097 & 0.9736 \\ 
Customer Values & 100 & 148.5882 & 163.8168 & 156.5560 & 4.3693 \\ 
Observed Prices & 68 & 151.1072 & 162.7155 & 160.1135 & 2.4620 \\ 
\hline
\textit{Panel B. Unobservable Latent Variables}\\
\hline
Dealer Values & 100 & 71.2753 & 163.0024 & 156.2776 & 11.2075 \\ 
Bargaining Powers & 622 & 0.0735 & 0.9398 & 0.4967 & 0.2057 \\ 
Potential Transaction Prices & 622 & 79.4330 & 162.9168 & 156.4676 & 9.4779 \\ 
Costs & 100 & 0.0758 & 82.7652 & 3.8293 & 9.7679 \\ 
\hline
\end{tabular}
\end{center}
\label{tab:summary_stats_dense}
\footnotesize{\textit{Note:}  This table shows the generated data of observable and latent variables in the test case of dense random networks. } 
\end{table}


\subsection{Evaluating the Performance}
\par We apply TGNN to estimate the parameters on asset features, dealer features, and relationship features with the dense random networks generated above.  We initialize the parameters with random values from a uniform distribution between $[-0.1,0.1]$. We calculated the Mean Squared Error (MSE) loss between the predicted prices and observed prices. The number of message passing iterations $L$ is 10, and the learning rate $\alpha$ is 0.01. We trained the model for 300 epochs.\footnote{ Appendix Figure \ref{fig:test_train_loss} shows the training loss over time. The loss converges when the training epoch is larger than 200.} 
\par We use bootstrap method to compute the confidence interval. Our approach maintains the fixed network structure of dealer-asset relationships while resampling from the set of observed best prices. For each bootstrap iteration, we randomly sample a subset of prices with replacement, retrain the model on this modified dataset, and extract the resulting parameter estimates. This process is repeated 100 times to generate empirical distributions for each parameter vector. 
\par Figure \ref{fig:est_dense} shows the estimated parameters with confidence intervals. First, the estimates are close to the true values which are within 95\% confidence intervals (CIs). Second, the distribution of the estimates can be skewed. Third, the bootstrap results show that relationship parameters exhibit the widest confidence bands, indicating greater sensitivity to which specific prices are observed. These patterns align with our theoretical understanding that asset-specific factors are identified through multiple observations of the same asset across different dealers, while relationship-specific factors depend more heavily on observing multiple observations of the same buyer-seller pair across different assets and days.
\par Figure \ref{fig:est_latent_dense} shows the predicted unobservable latent variables against their true values. We can see that the TGNN recovers unobservable bargaining power, holding costs, dealer values, and potential transaction prices between two dealers with high accuracy. 

\begin{figure}
     \centering
     \begin{subfigure}[b]{0.6\textwidth}
         \centering
         \includegraphics[width=\textwidth]{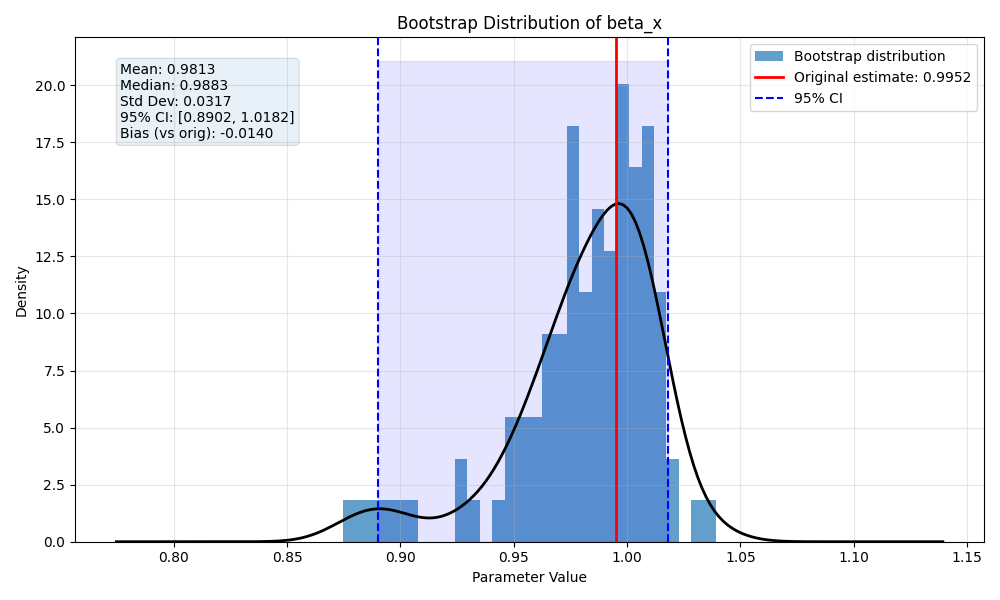}
         \caption{$\hat{\beta}_x$}
         \label{fig:beta_x_dense}
     \end{subfigure}
     \hfill
     \begin{subfigure}[b]{0.6\textwidth}
         \centering
         \includegraphics[width=\textwidth]{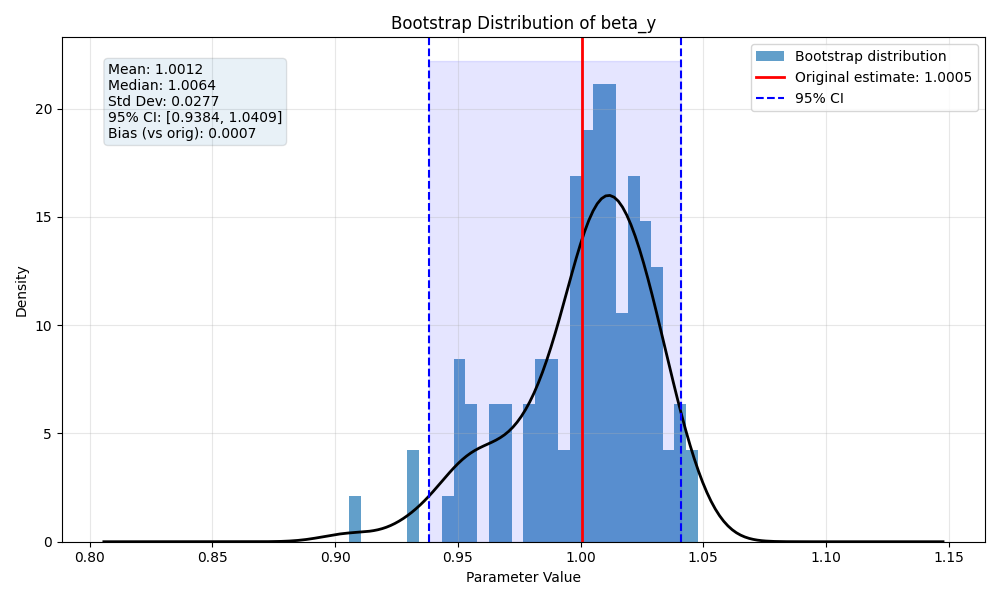}
         \caption{$\hat{\beta}_y$}
         \label{fig:beta_y_dense}
     \end{subfigure}
     \hfill
     \begin{subfigure}[b]{0.6\textwidth}
         \centering
         \includegraphics[width=\textwidth]{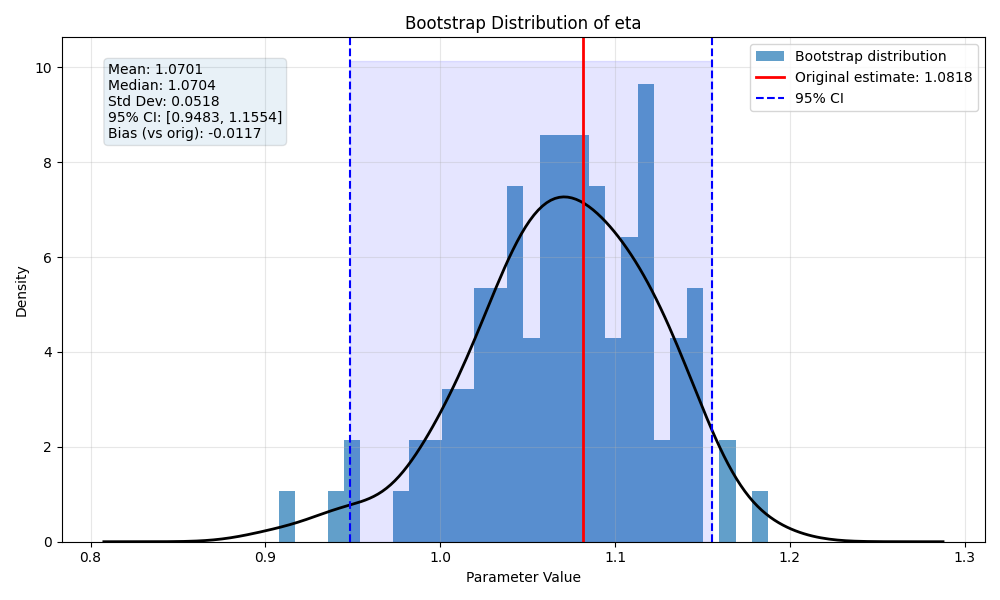}
         \caption{$\hat{\eta}$}
         \label{fig:eta_dense}
     \end{subfigure}
        \caption{Estimated Parameters in Dense Random Networks}
        \label{fig:est_dense}
\end{figure}

\begin{figure}
     \centering
     \begin{subfigure}[b]{0.5\textwidth}
         \centering
         \includegraphics[width=\textwidth]{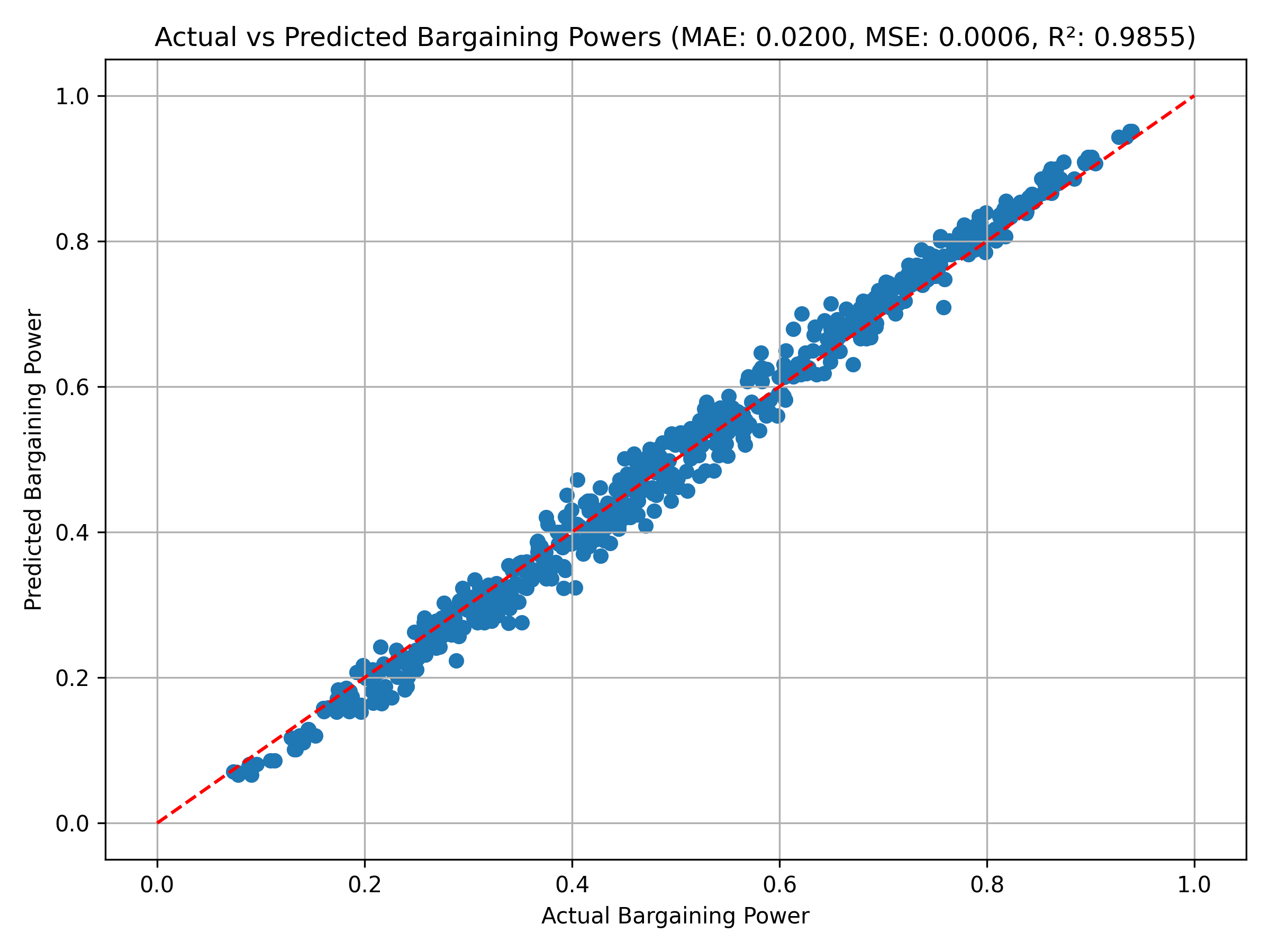}
         \caption{Bargaining Power}
         \label{fig:pi_dense}
     \end{subfigure}%
     \begin{subfigure}[b]{0.5\textwidth}
         \centering
         \includegraphics[width=\textwidth]{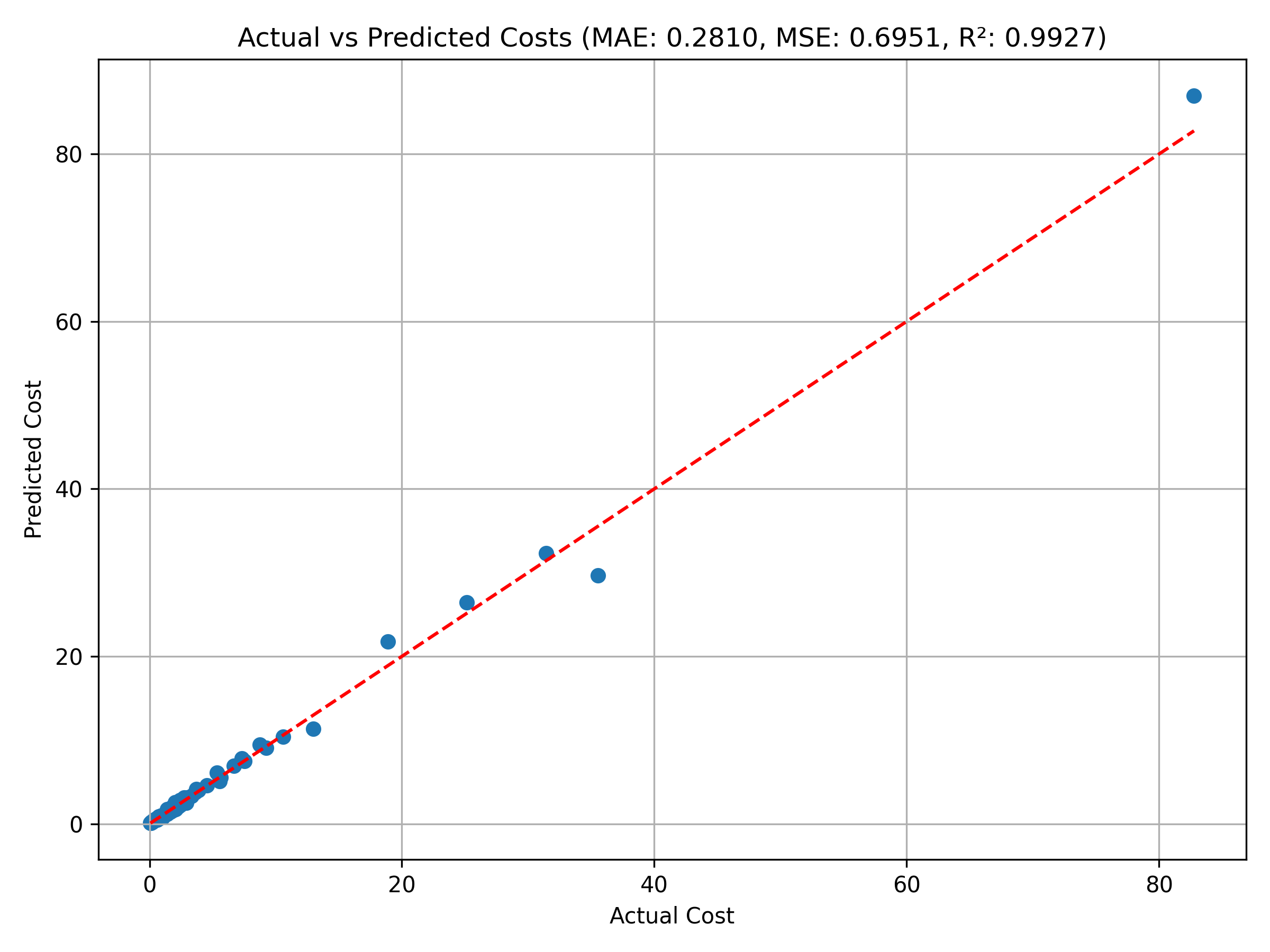}
         \caption{Holding Costs}
         \label{fig:c_dense}
     \end{subfigure}
     \begin{subfigure}[b]{0.5\textwidth}
         \centering
         \includegraphics[width=\textwidth]{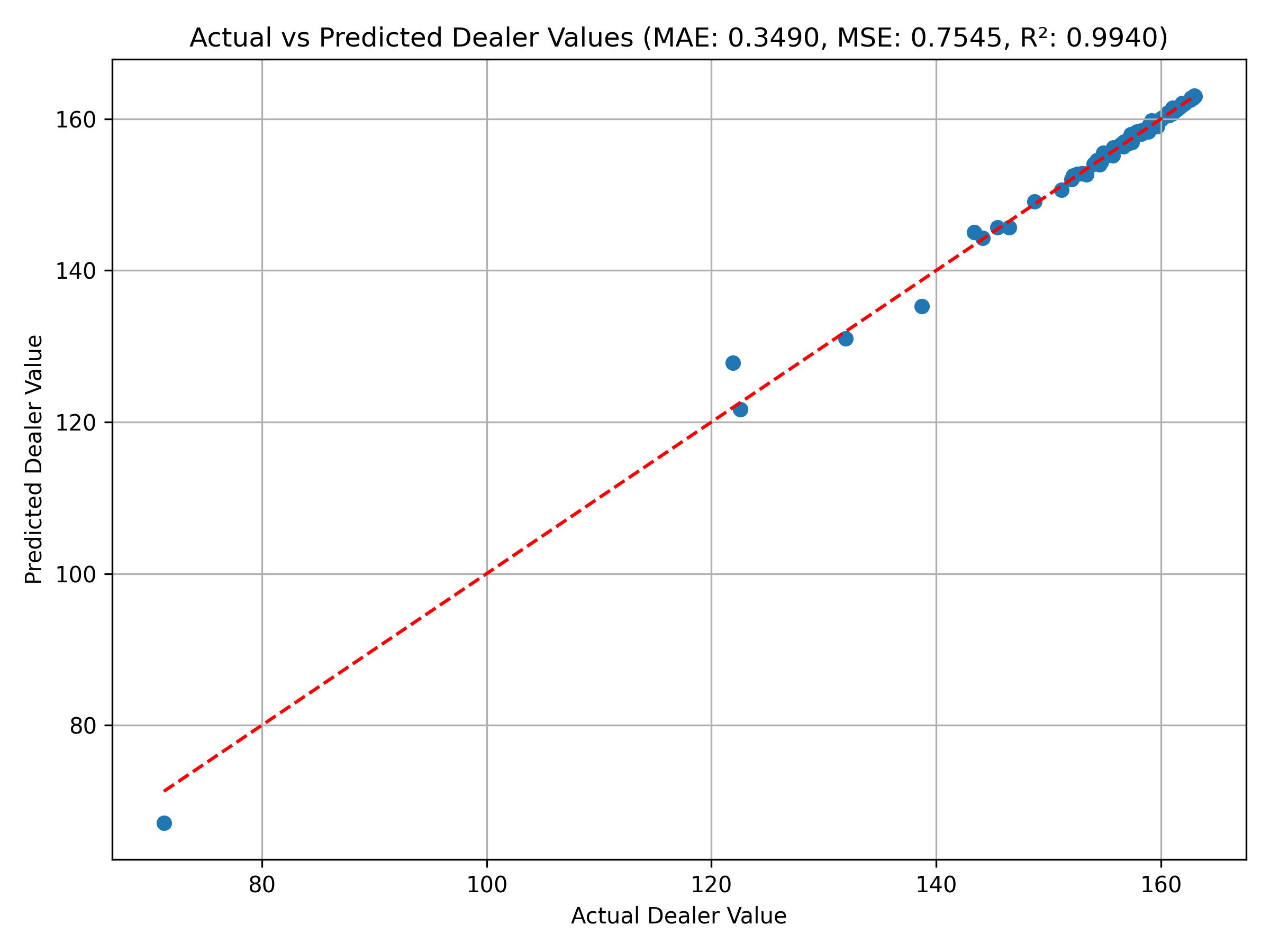}
         \caption{Dealer Values}
         \label{fig:v_dense}
     \end{subfigure}%
          \begin{subfigure}[b]{0.5\textwidth}
         \centering
         \includegraphics[width=\textwidth]{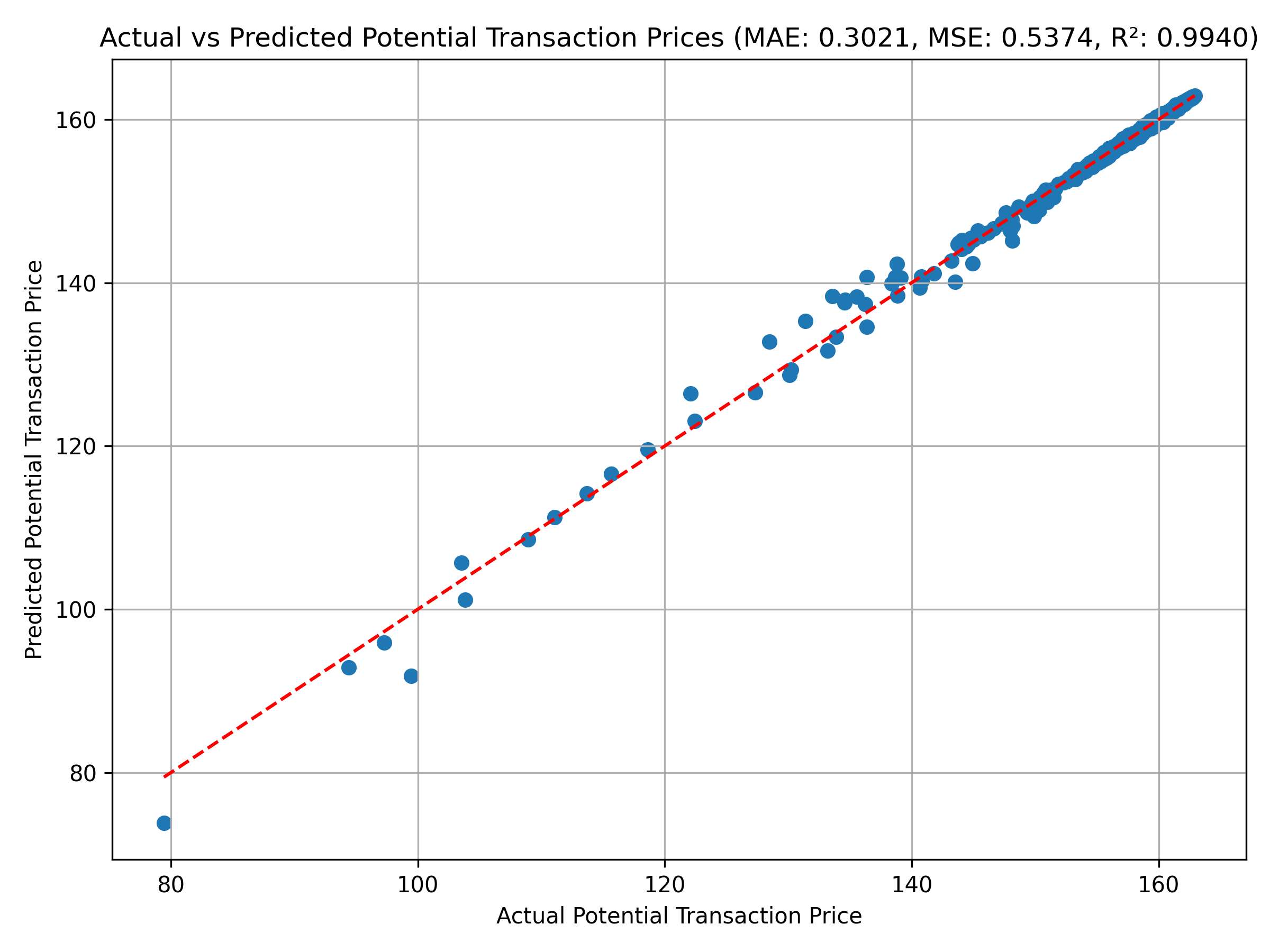}
         \caption{Potential Transaction Prices}
         \label{fig:p_dense}
     \end{subfigure}
        \caption{Comparison of Predicted vs. Actual Latent Variables in Dense Random Networks}
        \label{fig:est_latent_dense}
\end{figure}

\par As a comparison, we use OLS regression to estimate the following equation 
\begin{equation}\label{eq:OLS}
    p_{ijkt} =  X_{kt} \beta_x + Y_{it} \beta_{s} + Y_{jt} \beta_{b} + E_{ijt} \eta  + \delta_{ijkt}
\end{equation}
where $p_{ijkt}$ are observed prices, $Y_{it}$ are seller $i$'s features, $Y_{jt}$ are buyer $j$'s features, and $E_{ijt}$ are features of dealer $i$ and $j$'s relationship, $\delta_{ijkt}$ is the error term. Specifications similar to equation (\ref{eq:OLS}) are commonly used in the literature to estimate the impact of dealer features or asset features on asset prices. 
\par We can add various centrality measures of the dealer buyer and dealer seller to control for the network structure,
\begin{equation}\label{eq:OLS_centrality}
    p_{ijkt} = X_{kt} \beta_x + Y_{it} \beta_{s} + Y_{jt} \beta_{b} + E_{ijt} \eta + C_{ikt} \gamma_{s} + C_{jkt} \gamma_{b} + \delta_{ijkt}.
\end{equation}
where $C_{ikt}$ and $C_{jkt}$ are centrality measures for seller $i$ and buyer $j$ respectively. These centrality measures include degree centrality, eigenvector centrality, and betweenness centrality.  
\par Besides, we can account for the heterogeneous impact of these features for dealers in different network positions with the following regression, 
\begin{eqnarray}\label{eq:OLS_centrality_interactions}
 &&   p_{ijkt} = X_{kt} \beta_x + Y_{it} \beta_{s} + Y_{jt} \beta_{b} + E_{ijt} \eta + C_{ikt} \gamma_{s} + C_{jkt} \gamma_{b} + \boldsymbol{I}_{ikt}'\boldsymbol{\zeta}_{s} + \boldsymbol{I}_{jkt}' \boldsymbol{\zeta}_b + \delta_{ijkt}.
\end{eqnarray}
where $\boldsymbol{I}_{ikt}$ are interaction terms between the asset features, trader features, relationship features and the seller $i$'s centrality measures, and $\boldsymbol{I}_{jkt}$ are interaction terms between the asset features, trader features, relationship features and the buyer $j$'s centrality measures.

\par Table \ref{tab:model_comparison_dense} shows the performance comparison between OLS and TGNN. Compared with OLS, TGNN has the highest $R^2$, the lowest Mean Absolute Error and Mean Squared Error. Figure \ref{fig:prediction_comparison_dense} shows the predicted prices against actual prices.  We can see that TGNN's predicted prices fit the data better than the best OLS estimation with all centrality interaction terms. Table \ref{tab:model_comparison_dense} also shows that TGNN has the lowest Akaike Information Criterion (AIC) and Bayesian Information Criterion (BIC), suggesting that TGNN is the best statistical model for recovering the parameters in dense random networks.\footnote{The Akaike Information Criterion (AIC) and the Bayesian Information Criterion (BIC) are model selection tools that balance model fit and complexity. Both criteria penalize the likelihood function based on the number of parameters to avoid overfitting. AIC is defined as \( \text{AIC} = 2k - 2\ln(L) \), where \( k \) is the number of estimated parameters and \( L \) is the maximized value of the likelihood function. BIC is given by \( \text{BIC} = \ln(n)k - 2\ln(L) \), where \( n \) is the number of observations. While both criteria favor models with lower values, BIC imposes a heavier penalty on model complexity, especially when the sample size is large. As a result, BIC tends to select more parsimonious models compared to AIC.}  
\par TGNN performs better than OLS regressions for two reasons. First, it structurally incorporates the network structures which can improve the estimation compared with centrality measures that are not micro-founded. Second, TGNN utilizes customer values that serve as outside options for the dealer sellers. These outside options never appear in the interdealer transaction prices, but serve as a threshold to truncate the data of observable prices. OLS regressions without accounting for these customer values may lead to biased estimation and lower explanatory power. 
\begin{table}[htbp]
\caption{Performance Comparison: OLS vs. TGNN Models in Dense Random Networks}
\label{tab:model_comparison_dense}
\begin{center}
    \begin{tabular}{lcccccc}
\hline
Model & $R^2$ & MAE & MSE & Parameters & AIC & BIC \\ \hline
OLS Basic & 0.3486 & 1.5659 & 3.9486 & 5 & 103.4 & 114.5 \\ 
OLS + Degree & 0.3967 & 1.5310 & 3.6570 & 9 & 106.2 & 126.1 \\ 
OLS + Eigenvector & 0.3853 & 1.4249 & 3.7259 & 7 & 103.4 & 119.0 \\ 
OLS + Betweenness & 0.3552 & 1.5889 & 3.9083 & 7 & 106.7 & 122.2 \\ 
OLS + All Centrality & 0.4551 & 1.3934 & 3.3030 & 13 & 107.2 & 136.1 \\ 
OLS + Eigenvector Interactions & 0.4841 & 1.2332 & 3.1273 & 15 & 107.5 & 140.8 \\ 
OLS + Centrality Interactions & 0.8151 & 0.7490 & 1.1211 & 45 & 97.8 & 197.6 \\ 
TGNN & 0.9930 & 0.1548 & 0.0427 & 3 & -208.4 & -201.8 \\ 
\hline
\end{tabular}
\end{center}
    \footnotesize{\textit{Note:} This table shows the performance comparison of OLS regressions and TGNN in dense random networks. OLS Basic refers to the regression in equation (\ref{eq:OLS}). OLS + Degree, OLS + Eigenvector, OLS + Betweenness and OLS + All Centrality refer to the regressions in equation (\ref{eq:OLS_centrality}) with degree centrality, eigenvector centrality, betweenness centrality, and all the above centrality measures. OLS + Eigenvector Interactions refers to the regression equation (\ref{eq:OLS_centrality_interactions}) with eigenvector centrality and its interaction terms with all the asset features, dealer-buyer features, dealer-seller features and relationship features. OLS + Centrality Interactions refers to the regression equation (\ref{eq:OLS_centrality_interactions}) with all centrality measures and its interaction terms with all the asset features, dealer-buyer features, dealer-seller features and relationship features. Trading Graph Neural Network (TGNN) is the structural estimation method described by Algorithm \ref{alg:train} and \ref{alg:CI}. }
\end{table}

\begin{figure}[htbp]
\begin{center}
    \includegraphics[width=0.7\linewidth]{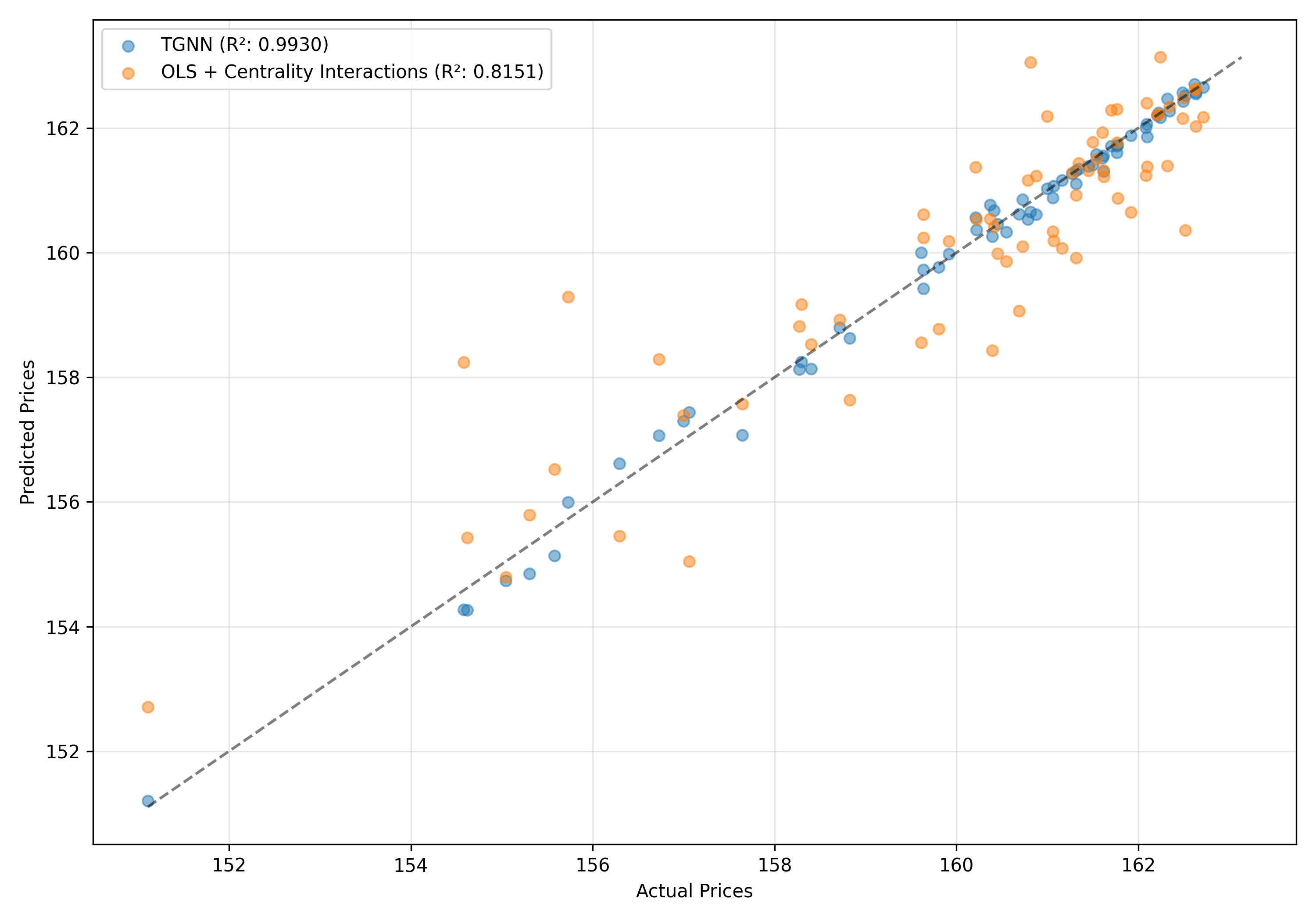}
\end{center}
    \caption{Prediction Comparison: OLS with Centrality Interactions vs. TGNN in Dense Random Networks}
    \label{fig:prediction_comparison_dense}
         \footnotesize{\textit{Note:} This figure shows the comparison of accuracy of predicted prices between the OLS with the highest $R^2$ -- OLS with centrality interaction and TGNN. The dash line is the reference line where the predicted prices are equal to the actual prices. }
\end{figure}

\subsection{Sparse Networks} 
\par Previously, we have tested the performance of TGNN with dense Erdos–Rényi random networks, we further examine the performance in sparser networks. The data generation process is the same as the previous test case, but each potential edge is included in the network with 20\% probability. Figure \ref{fig:trading_networks_sparse} shows the simulated sparse network. Table \ref{tab:summary_stats_sparse} summarizes the statistics of the variables.

\begin{figure}[htbp]
    \includegraphics[width=\linewidth]{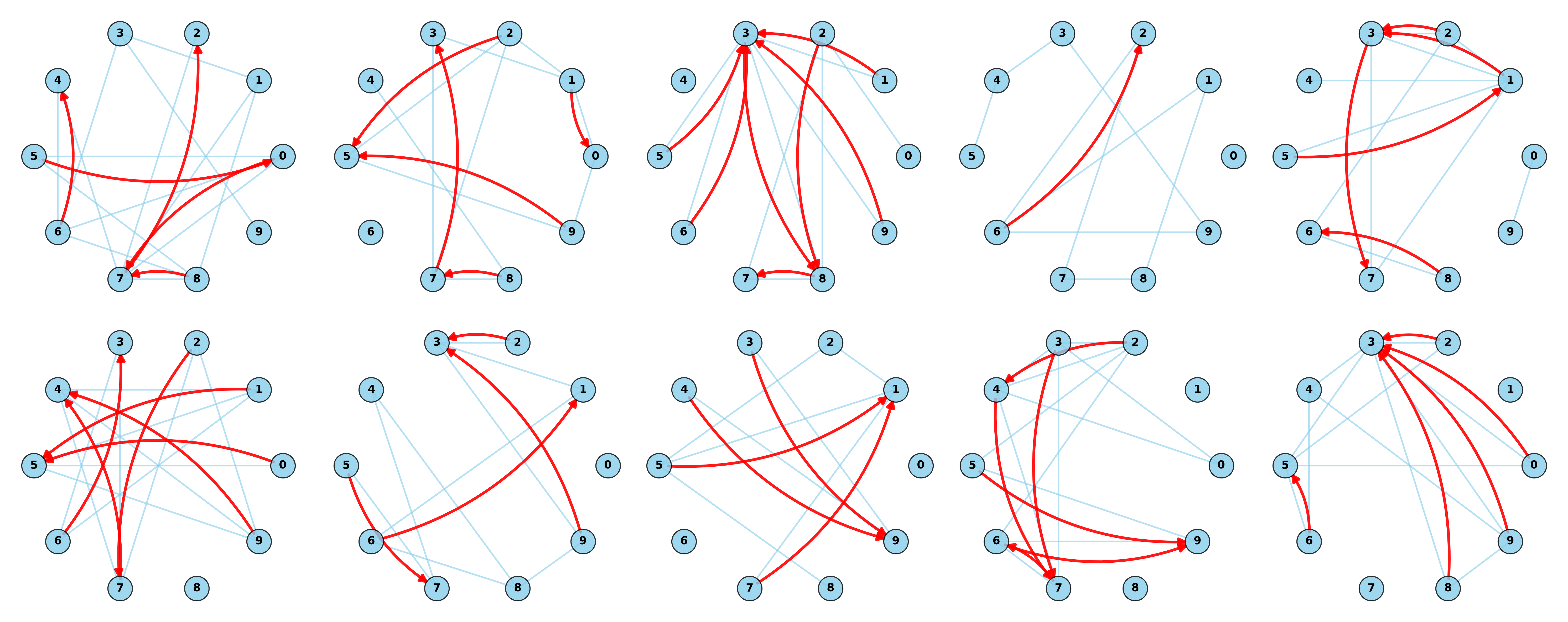}
    \caption{The Structure of Sparse Random Networks}
    \label{fig:trading_networks_sparse}
     \footnotesize{\textit{Note:} This figure shows the structure of interdealer networks generated for 10 dealers, 2 assets over 5 days with 206 edges. The blue lines indicate the directed edges, where the edge is directed from seller to buyer. The red arrows denote observable trades, with the arrowheads indicating the direction toward the buyer. \raggedright}
\end{figure}   
\begin{table}[htbp]
\caption{Summary Statistics of Trading Network Variables in Sparse Random Networks}
\begin{center}
    \begin{tabular}{lccccc}
\hline
Variable & N & Min & Max & Mean & Std \\ \hline
\textit{Observable Variables} \\
\hline
Asset Feature X & 100 & -1.1229 & 2.2082 & 0.1920 & 0.8205 \\ 
Dealer Feature Y & 100 & -2.2064 & 2.8140 & 0.0969 & 0.9887 \\ 
Relationship Feature E & 206 & -2.3768 & 2.7421 & 0.0251 & 0.9540 \\ 
Observed Best Prices & 48 & 152.1473 & 163.1463 & 158.6704 & 2.7209 \\ 
Customer Values & 100 & 148.5162 & 163.7988 & 156.2434 & 4.3160 \\
\hline
\textit{Unobservable Latent Variables} \\
\hline
Dealer Values & 100 & 74.7297 & 163.5353 & 154.5093 & 10.8971 \\ 
Bargaining Powers & 206 & 0.0781 & 0.9406 & 0.5028 & 0.2027 \\ 
Potential Transaction Prices & 206 & 87.8948 & 163.1463 & 155.3911 & 8.9097 \\ 
Costs & 100 & 0.0750 & 85.9399 & 3.9459 & 10.1757 \\  
\hline
\end{tabular}
\end{center}
\label{tab:summary_stats_sparse}
\footnotesize{\textit{Note:}  This table shows the generated data of observable and latent variables in the test case of sparse random networks. } 
\end{table}
\par Table \ref{tab:param_estimates_sparse} shows the parameter estimates with confidence intervals from TGNN in sparse networks. It shows that the estimates are close to the true values.\footnote{ Appendix Figure \ref{fig:est_sparse} shows the bootstrap distribution of the estimators in sparse random networks. Appendix Figure \ref{fig:est_latent_sparse} shows the comparison of predicted versus the actual bargaining power, holding costs, dealer values and potential transaction costs in sparse random networks. }
\begin{table}[htbp]
\caption{Parameter Estimates with Bootstrap Confidence Intervals in Sparse Random Networks}
\begin{center}
    \begin{tabular}{lcccc}
\hline
Parameter & True Value & Estimate & 95\% CI  \\ \hline
$\hat{\beta}_x$ & 1.0000 & 1.0345 & [0.9775, 1.1113] \\
$\hat{\beta}_y$ & 1.0000 & 0.9946 & [0.9396, 1.0394] \\
$\hat{\eta}$ & 1.0000 & 1.0232 & [0.9414, 1.1244] \\
\hline
\end{tabular}
\end{center}
\label{tab:param_estimates_sparse}
 \footnotesize{\textit{Note:}  This table shows the estimated parameters on asset features $\hat{\beta}_x$, on dealer features $\hat{\beta}_y$ and relationship features $\hat{\eta}$ with 95\% confidence intervals using TGNN in sparse random networks. } 
\end{table}

\par Table \ref{tab:model_comparison_sparse} shows the comparison between TGNN and OLS with different specifications. We can see that the TGNN has the highest $R^2$, lowest MAE and MSE among all specifications. AIC and BIC suggest that TGNN remains the best model for recovering the parameters in sparse random networks.
\begin{table}[htbp]
\caption{Performance Comparison: OLS vs. TGNN Models in Sparse Random Networks}
\begin{center}
\begin{tabular}{lcccccc}
\hline
Model & $R^2$ & MAE & MSE & Parameters & AIC & BIC \\ \hline
OLS Basic & 0.5689 & 1.4524 & 3.1914 & 5 & 65.7 & 75.1 \\ 
OLS + Degree & 0.5773 & 1.4331 & 3.1291 & 9 & 72.8 & 89.6 \\ 
OLS + Eigenvector & 0.5909 & 1.3756 & 3.0285 & 7 & 67.2 & 80.3 \\ 
OLS + Betweenness & 0.5732 & 1.4317 & 3.1597 & 7 & 69.2 & 82.3 \\ 
OLS + All Centrality & 0.5964 & 1.4061 & 2.9881 & 13 & 78.5 & 102.9 \\ 
OLS + Eigenvector Interactions & 0.6794 & 1.1939 & 2.3731 & 15 & 71.5 & 99.6 \\ 
OLS + Centrality Interactions & 0.8913 & 0.5701 & 0.8046 & 45 & 79.6 & 163.8 \\ 
TGNN & 0.9906 & 0.1938 & 0.0695 & 3 & -122.0 & -116.4 \\ 
\hline
\end{tabular}
\end{center}
\label{tab:model_comparison_sparse}
    \footnotesize{\textit{Note:}  This table shows the performance comparison of OLS regressions and TGNN in sparse random networks. OLS Basic refers to the regression in equation (\ref{eq:OLS}). OLS + Degree, OLS + Eigenvector, OLS + Betweenness and OLS + All Centrality refer to the regressions in equation (\ref{eq:OLS_centrality}) with degree centrality, eigenvector centrality, betweenness centrality, and all the above centrality measures. OLS + Eigenvector Interactions refers to the regression equation (\ref{eq:OLS_centrality_interactions}) with eigenvector centrality and its interaction terms with all the asset features, dealer-buyer features, dealer-seller features and relationship features. OLS + Centrality Interactions refers to the regression equation (\ref{eq:OLS_centrality_interactions}) with all centrality measures and its interaction terms with all the asset features, dealer-buyer features, dealer-seller features and relationship features. Trading Graph Neural Network (TGNN) is the structural estimation method described by Algorithm \ref{alg:train} and \ref{alg:CI}. }
\end{table}

\subsection{Core-periphery Networks} 
So far, we've examined TGNN's performance in test cases of Erdos–Rényi (ER) random graphs. We further examine the performance of TGNN with core-periphery networks. A \textit{core-periphery financial network} is a market structure where financial institutions, assets, or trading venues are divided into a \textit{core} and a \textit{periphery}, based on their connectivity, liquidity, and influence in trading activities. Core traders are highly interconnected, serve as central hubs for trading and price discovery, and typically include major exchanges, large banks, or dominant market makers. In contrast, periphery traders are less connected. Most OTC markets are core-periphery trading networks, e.g. bond market and CDS market \citep{wang2016core}.
\par We generate core-periphery networks with 4 core traders, 16 periphery traders, 2 assets, and 5 days. The probability for a core dealer to have a link with another core dealer is 90\%, for a core dealer to have a link with a periphery dealer is 70\%, and for a periphery dealer to have a link with another periphery dealer is 1\%. Figure  \ref{fig:core_periphery_network} shows the simulated core-periphery networks. Table \ref{tab:summary_stats_cp} shows the summary of statistics. 

\begin{figure}[htbp]
    \includegraphics[width=\linewidth]{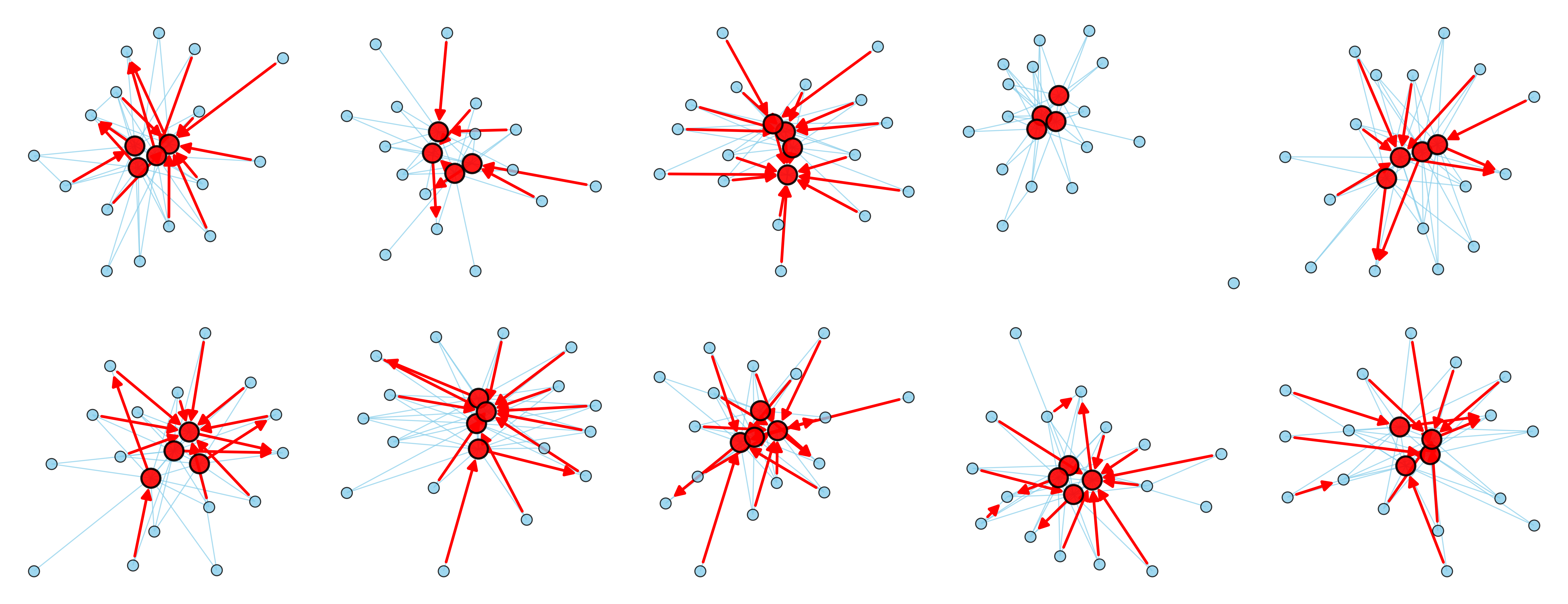}
    \caption{The Structure of Core-Periphery Networks}
    \label{fig:core_periphery_network}
     \footnotesize{\textit{Note:} This figure shows the structure of interdealer networks generated for 4 core dealers (red node), 16 periphery dealers (blue node), 2 assets over 5 days with 998 edges. The blue lines indicate the directed edges, where the edge is directed from seller to buyer. The red arrows denote observable trades, with the arrowheads indicating the direction toward the buyer. \raggedright}
\end{figure}   

\begin{table}[htbp]
\caption{Summary Statistics of Trading Network Variables in Core-periphery Networks}
\begin{center}
    \begin{tabular}{lccccc}
\hline
Variable & N & Min & Max & Mean & Std \\ \hline
\textit{Observable Variables}\\
\hline
Asset Feature X & 200 & -1.1229 & 2.2082 & 0.1920 & 0.8205 \\ 
Dealer Feature Y & 200 & -2.2064 & 2.8140 & 0.1205 & 0.9103 \\ 
Relationship Feature E & 998 & -3.0668 & 2.9246 & -0.0366 & 0.9494 \\ 
Observed Prices & 123 & 152.9716 & 163.7037 & 160.0051 & 2.0540 \\ 
Customer Values & 200 & 148.6951 & 163.9211 & 156.1325 & 4.1576 \\ 
\hline
\textit{Unobservable Latent Variables}\\
\hline
Dealer Values & 200 & 21.5183 & 163.7215 & 155.8367 & 11.6800 \\ 
Bargaining Powers & 998 & 0.0416 & 0.9546 & 0.4933 & 0.2001 \\ 
Potential Transaction Prices & 998 & 41.8713 & 163.7037 & 155.3923 & 12.4922 \\ 
Costs & 200 & 0.0362 & 139.2609 & 3.8084 & 11.0124 \\ 
\hline
\end{tabular}
\end{center}
\label{tab:summary_stats_cp}
 \footnotesize{\textit{Note:}  This table shows the generated data of observable and latent variables in the test case of core-periphery networks. } 
\end{table}

\par Table \ref{tab:param_estimates_cp} shows the estimated parameters with confidence intervals from TGNN in core-periphery networks. We can see that TGNN recovers estimates that are close to the true values.\footnote{ Appendix Figure \ref{fig:est_cp} shows the bootstrap distribution of the estimators in core-periphery networks. Appendix Figure \ref{fig:est_latent_cp} shows the comparison of predicted versus the actual bargaining power, holding costs, dealer values and potential transaction costs in core-periphery networks. }
\begin{table}[htbp]
\caption{Parameter Estimates with Bootstrap Confidence Intervals in Core-Periphery Networks}
\begin{center}
    \begin{tabular}{lcccc}
\hline
Parameter & True Value & Estimate & 95\% CI  \\ \hline
$\hat{\beta}_x$ & 1.0000 & 1.0041 & [0.9480, 1.0459] \\
$\hat{\beta}_y$ & 1.0000 & 0.9703 & [0.9138, 1.0376] \\
$\hat{\eta}$ & 1.0000 & 0.9934 & [0.9158, 1.0348] \\
\hline
\end{tabular}
\end{center}
\label{tab:param_estimates_cp}
   \footnotesize{\textit{Note:}  This table shows the estimated parameters on asset features $\hat{\beta}_x$, on dealer features $\hat{\beta}_y$ and relationship features $\hat{\eta}$ with 95\% confidence intervals using TGNN in core-periphery networks. } 
\end{table}

\par Table \ref{tab:model_comparison_cp} shows the comparison between TGNN and OLS with centrality measures and interaction terms. We can see that TGNN still has the highest $R^2$, lowest MAE and MSE. AIC and BIC suggest that TGNN is still the best among these statistical models for recovering parameters in core-periphery networks. 

\begin{table}[htbp]
\caption{Performance Comparison: OLS vs. TGNN Models in Core-periphery Networks}
\begin{center}
\begin{tabular}{lcccccc}
\hline
Model & $R^2$ & MAE & MSE & Parameters & AIC & BIC \\ \hline
OLS Basic & 0.5893 & 1.0108 & 1.7329 & 5 & 77.6 & 91.7 \\ 
OLS + Degree & 0.6262 & 0.9647 & 1.5771 & 9 & 74.0 & 99.3 \\ 
OLS + Eigenvector & 0.6041 & 0.9730 & 1.6702 & 7 & 77.1 & 96.8 \\ 
OLS + Betweenness & 0.6146 & 0.9661 & 1.6260 & 7 & 73.8 & 93.5 \\ 
OLS + All Centrality & 0.6936 & 0.8497 & 1.2928 & 13 & 57.6 & 94.1 \\ 
OLS + Eigenvector Interactions & 0.7125 & 0.7933 & 1.2131 & 15 & 53.8 & 95.9 \\ 
OLS + Centrality Interactions & 0.8188 & 0.6578 & 0.7647 & 45 & 57.0 & 183.5 \\ 
TGNN & 0.9843 & 0.1815 & 0.0662 & 3 & -328.0 & -319.6 \\ 
\hline
\end{tabular}
\end{center}
\label{tab:model_comparison_cp}
    \footnotesize{\textit{Note:}  This table shows the performance comparison of OLS regressions and TGNN in core-periphery networks. OLS Basic refers to the regression in equation (\ref{eq:OLS}). OLS + Degree, OLS + Eigenvector, OLS + Betweenness and OLS + All Centrality refer to the regressions in equation (\ref{eq:OLS_centrality}) with degree centrality, eigenvector centrality, betweenness centrality, and all the above centrality measures. OLS + Eigenvector Interactions refers to the regression equation (\ref{eq:OLS_centrality_interactions}) with eigenvector centrality and its interaction terms with all the asset features, dealer-buyer features, dealer-seller features and relationship features. OLS + Centrality Interactions refers to the regression equation (\ref{eq:OLS_centrality_interactions}) with all centrality measures and its interaction terms with all the asset features, dealer-buyer features, dealer-seller features and relationship features. Trading Graph Neural Network (TGNN) is the structural estimation method described by Algorithm \ref{alg:train} and \ref{alg:CI}. }
\end{table}

\section{Applications}

\par TGNN offers powerful applications for analyzing decentralized markets in economics and finance. From a methodological perspective, the TGNN enables structural estimation of economic models directly from observed transaction data. This approach bridges the gap between theoretical models of OTC markets and empirical analysis, allowing researchers to test economic theories while accounting for network effects. It accommodates heterogeneity in trading relationships, allowing for varied bargaining powers across different participant pairs and contextual factors that influence transaction outcomes. This flexibility makes the model adaptable to diverse market settings while maintaining its economic interpretability, where parameters directly correspond to economic quantities of interest rather than abstract neural network weights. The model estimates can be used for counterfactual analysis, such as simulating market outcomes under different regulatory regimes, market structures, or entry/exit scenarios.

\par A significant application lies in over-the-counter (OTC) markets, where trades occur through bilateral negotiations rather than central exchanges, e.g., fixed income securities, municipal bonds, and interbank lending networks. TGNN can quantify dealer relationships and identify key market participants. The model's ability to estimate marginal holding costs (\(c_i\)) and bargaining powers (\(\alpha_{ij}\)) provides crucial insights into price formation mechanisms and market efficiency. This is particularly valuable for regulatory oversight and systemic risk assessment. By recovering the fundamental parameters that drive trading decisions, the model helps identify potential market manipulation, or concentration of market power. Moreover, the network structure revealed by the model can highlight vulnerable nodes in the financial system that might propagate shocks during periods of market stress, contributing to macroprudential policy and financial stability analysis.

\par Beyond traditional finance, this methodology extends to emerging decentralized markets, including cryptocurrency trading and peer-to-peer lending platforms. As these markets mature, understanding their network structure and price formation becomes increasingly important for investors, platform designers, and regulators. 

\clearpage
\bibliography{ref}

\begin{thebibliography}{}

\bibitem[\protect\citeauthoryear{Babus and Kondor}{Babus and Kondor}{2018}]{babus2018trading}
Babus, A. and P.~Kondor (2018).
\newblock Trading and information diffusion in over-the-counter markets.
\newblock {\em Econometrica\/}~{\em 86\/}(5), 1727--1769.

\bibitem[\protect\citeauthoryear{Brogaard and Chen}{Brogaard and Chen}{2024}]{brogaard2024attention}
Brogaard, J. and B.~C. Chen (2024).
\newblock Attention-based graph neural networks in firm cds prediction.

\bibitem[\protect\citeauthoryear{Cai}{Cai}{2022}]{cai2022linear}
Cai, Y. (2022).
\newblock Linear regression with centrality measures.
\newblock {\em arXiv preprint arXiv:2210.10024\/}.

\bibitem[\protect\citeauthoryear{Cohen, Kargar, Lester, and Weill}{Cohen et~al.}{2024}]{cohen2024inventory}
Cohen, A., M.~Kargar, B.~Lester, and P.-O. Weill (2024).
\newblock Inventory, market making, and liquidity in otc markets.
\newblock {\em Journal of Economic Theory\/}~{\em 222}, 105917.

\bibitem[\protect\citeauthoryear{Di~Maggio, Kermani, and Song}{Di~Maggio et~al.}{2017}]{di2017value}
Di~Maggio, M., A.~Kermani, and Z.~Song (2017).
\newblock The value of trading relations in turbulent times.
\newblock {\em Journal of Financial Economics\/}~{\em 124\/}(2), 266--284.

\bibitem[\protect\citeauthoryear{Efron}{Efron}{1992}]{efron1992bootstrap}
Efron, B. (1992).
\newblock Bootstrap methods: another look at the jackknife.
\newblock In {\em Breakthroughs in statistics: Methodology and distribution}, pp.\  569--593. Springer.

\bibitem[\protect\citeauthoryear{Eisfeldt, Herskovic, Rajan, and Siriwardane}{Eisfeldt et~al.}{2023}]{eisfeldt2023otc}
Eisfeldt, A.~L., B.~Herskovic, S.~Rajan, and E.~Siriwardane (2023).
\newblock Otc intermediaries.
\newblock {\em The Review of Financial Studies\/}~{\em 36\/}(2), 615--677.

\bibitem[\protect\citeauthoryear{Hollifield, Neklyudov, and Spatt}{Hollifield et~al.}{2017}]{hollifield2017bid}
Hollifield, B., A.~Neklyudov, and C.~Spatt (2017).
\newblock Bid-ask spreads, trading networks, and the pricing of securitizations.
\newblock {\em The Review of Financial Studies\/}~{\em 30\/}(9), 3048--3085.

\bibitem[\protect\citeauthoryear{Leung and Loupos}{Leung and Loupos}{2022}]{leung2022graph}
Leung, M.~P. and P.~Loupos (2022).
\newblock Graph neural networks for causal inference under network confounding.
\newblock {\em arXiv preprint arXiv:2211.07823\/}.

\bibitem[\protect\citeauthoryear{Li and Sch{\"u}rhoff}{Li and Sch{\"u}rhoff}{2019}]{li2019dealer}
Li, D. and N.~Sch{\"u}rhoff (2019).
\newblock Dealer networks.
\newblock {\em The Journal of Finance\/}~{\em 74\/}(1), 91--144.

\bibitem[\protect\citeauthoryear{Malamud and Rostek}{Malamud and Rostek}{2017}]{malamud2017decentralized}
Malamud, S. and M.~Rostek (2017).
\newblock Decentralized exchange.
\newblock {\em American Economic Review\/}~{\em 107\/}(11), 3320--3362.

\bibitem[\protect\citeauthoryear{McFadden}{McFadden}{1989}]{mcfadden1989method}
McFadden, D. (1989).
\newblock A method of simulated moments for estimation of discrete response models without numerical integration.
\newblock {\em Econometrica: Journal of the Econometric Society\/}, 995--1026.

\bibitem[\protect\citeauthoryear{Pakes and Pollard}{Pakes and Pollard}{1989}]{pakes1989simulation}
Pakes, A. and D.~Pollard (1989).
\newblock Simulation and the asymptotics of optimization estimators.
\newblock {\em Econometrica: Journal of the Econometric Society\/}, 1027--1057.

\bibitem[\protect\citeauthoryear{Pint{\'e}r and {\"U}sl{\"u}}{Pint{\'e}r and {\"U}sl{\"u}}{2022}]{pinter2022comparing}
Pint{\'e}r, G. and S.~{\"U}sl{\"u} (2022).
\newblock Comparing search and intermediation frictions across markets.
\newblock Technical report, Bank of England.

\bibitem[\protect\citeauthoryear{Rostek, Wu, and Yoon}{Rostek et~al.}{2025}]{rostek2025public}
Rostek, M., X.~Wu, and J.~H. Yoon (2025).
\newblock Public information in over-the-counter markets.
\newblock {\em Working paper\/}.

\bibitem[\protect\citeauthoryear{Wang}{Wang}{2016}]{wang2016core}
Wang, C. (2016).
\newblock Core-periphery trading networks.
\newblock {\em Available at SSRN 2747117\/}.

\bibitem[\protect\citeauthoryear{Wang, Gu, and Otsu}{Wang et~al.}{2024}]{wang2024graph}
Wang, Y., C.~Gu, and T.~Otsu (2024).
\newblock Graph neural networks: Theory for estimation with application on network heterogeneity.
\newblock {\em arXiv preprint arXiv:2401.16275\/}.

\end{thebibliography}

\clearpage
\appendices

\section{Proof}

\begin{proof}[Proof of Theorem \ref{thm:eqlm}]
We show that the model defined by equations (\ref{eq:value}) and (\ref{eq:price}) has a unique equilibrium for the vector of dealer values $\{v_{ikt}\}_i$ given $\{c_{ikt},u_{ikt}\}_{i,k,t}$ and $\{\pi_{ijkt}\}_{i,j,k,t}$ at time $t$ and asset $k$. Fix the time $t$ and asset $k$, and suppress these indices to simplify notation: write $v_i$ for $v_{ikt}$, $c_i$ for $c_{ikt}$, $u_i$ for $u_{ikt}$, and $p_{ij}$ for $p_{ijkt}$. Define the vector $\boldsymbol{v} = (v_1, \dots, v_n) \in \mathbb{R}^n$.
\par Define the mapping $T: \mathbb{R}^n \to \mathbb{R}^n$ such that the $i^{th}$ component of $T(\boldsymbol{v}): \mathbb{R}^n \to \mathbb{R}$ is given by
\begin{equation*}
    T_i(\boldsymbol{v}) = -c_i + \max\left\{ \max_{j \in \mathcal{N}(i)} \left\{ \pi_{ij} v_i + (1 - \pi_{ij}) v_j \right\}, u_i \right\}.
\end{equation*}
\par We want to show that $T$ is a contraction mapping. Suppose all $\pi_{ij} \in [\varepsilon, 1 - \varepsilon]$ for some $\varepsilon > 0$.

\par Let $\boldsymbol{v}, \boldsymbol{v}' \in \mathbb{R}^n$. For each $i$, define
\[
    p_{ij} = \pi_{ij} v_i + (1 - \pi_{ij}) v_j, \quad p_{ij}' = \pi_{ij} v_i' + (1 - \pi_{ij}) v_j'.
\]
\par Then for any $j\in \mathcal{N}(i)$,
\[
    |p_{ij} - p_{ij}'| 
    \leq \max\{\pi_{ij}, 1 - \pi_{ij}\} \cdot \|\boldsymbol{v} - \boldsymbol{v}'\|_\infty 
    \leq (1 - \varepsilon)\|\boldsymbol{v} - \boldsymbol{v}'\|_\infty.
\]
\par Since $T_i(\boldsymbol{v})$ takes a maximum over $u_i$ and the $\{p_{ij}\}_{j\in \mathcal{N}(i)}$ terms, we have
\[
    |T_i(\boldsymbol{v}) - T_i(\boldsymbol{v}')| 
    \leq \max_{j \in \mathcal{N}(i)} |p_{ij} - p_{ij}'| 
    \leq \|\boldsymbol{v} - \boldsymbol{v}'\|_\infty.
\]
\par So,
\[
    \|T(\boldsymbol{v}) - T(\boldsymbol{v}')\|_\infty 
    = \max_i |T_i(\boldsymbol{v}) - T_i(\boldsymbol{v}')|
    \leq (1 - \varepsilon) \|\boldsymbol{v} - \boldsymbol{v}'\|_\infty.
\]

\par Hence, $T$ is a contraction mapping on the complete metric space $(\mathbb{R}^n, \|\cdot\|_\infty)$.
\par By the Banach fixed point theorem, $T$ has a unique fixed point. That is, there exists a unique vector $\boldsymbol{v}^* \in \mathbb{R}^n$ such that $T(\boldsymbol{v}^*) = \boldsymbol{v}^*$. This proves the model has a unique equilibrium. Furthermore, we can find the equilibrium $\boldsymbol{v}^*$ by starting with an arbitrary element  $\boldsymbol{v}^*\in\mathbb{R}^n$, and define a sequence $\{\boldsymbol{v}^{(\ell)}\}_{\ell\in \mathbb{N}}$ by $\boldsymbol{v}^{(\ell)}=T(\boldsymbol{v}^{(\ell-1)})$ for $n\geq 1$. Then $\lim_{\ell\rightarrow \infty} \boldsymbol{v}^{(\ell)} = \boldsymbol{v}^*$.
\par It is easy to see that $v_{min}^*\equiv\min_i\{v_{ikt}\}\geq \min_i\{u_{ikt}-c_{ikt}\}$ as the lowest outside option is $\min_i\{u_{ikt}-c_{ikt}\}$. Let $v_{max}^*\equiv\max_i\{{v}_{ikt}\}$. As $v_{max}^*\geq \max \{p_{ijkt}\}_{i,j}$ and $c_{ikt}>0$, it's impossible for $v_{max}=\max_i\{-c_{ikt}+\max_{j\in\mathcal{N}(i)}\{p_{ijkt}\}_j\}$, so $v_{max}^* =  \max_i \{u_{ikt}-c_{ikt}\}$. Intuitively, the dealer buyer with the highest value does not sell to dealers and has a resale value of $ \max_i \{u_{ikt}-c_{ikt}\}$ to its customers. 
\par Given $\boldsymbol{v}^*$, seller $i$ sells the asset $k$ at price $ p_{ikt}^*=\max\left\{ \max_{j \in \mathcal{N}(i)} \left\{ \pi_{ijkt} v_{ikt}^* + (1 - \pi_{ijkt}) v_{jkt}^* \right\}, u_{ikt} \right\}$ at time $t$, with counterparty $j=\arg\max_{j \in \mathcal{N}(i)} \left\{ \pi_{ijkt} v_{ikt}^* + (1 - \pi_{ijkt}) v_{jkt}^* \right\}$ if $p_{ikt}^*>u_{ikt}$ and with customers otherwise.
\end{proof}

\clearpage
\section{Additional Figures and Tables}
\begin{figure}[htbp]
\begin{center}
       \includegraphics[width=\linewidth]{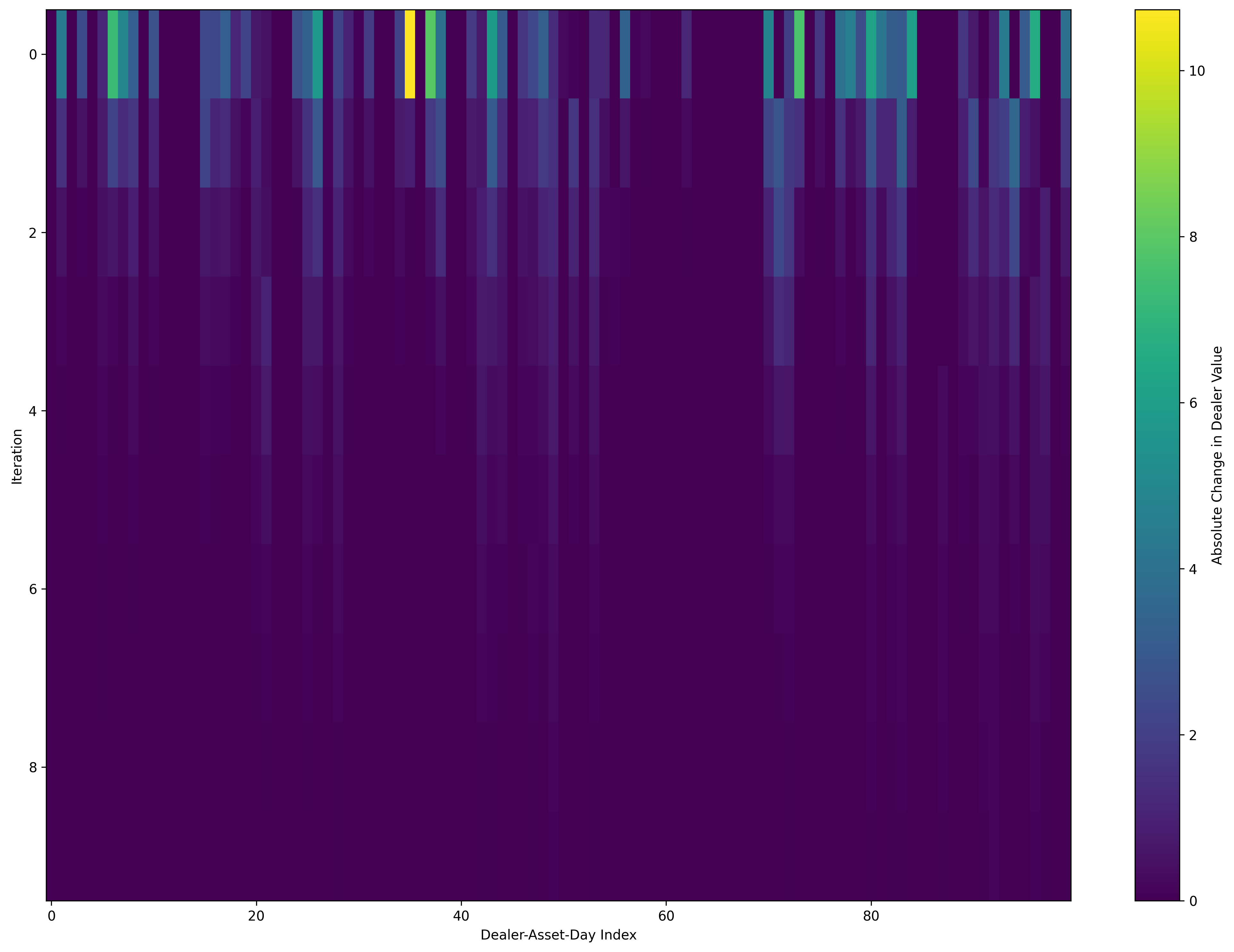}   
\end{center}
    \caption{Evolution of Dealer Values in Data Generation Process of Dense Random Networks }
    \label{fig:value_changes_dense}
\end{figure}
\footnotesize{\textit{Note:} This figure shows the evolution of dealer values with the number of interactions in the data generation of dense random networks. The horizontal axis is the index we assigned for each dealer value for each asset on each day. A darker color indicates a smaller change in dealer values between iterations.  }
\clearpage

\begin{figure}[htbp]
    \centering
    \includegraphics[width=0.6\linewidth]{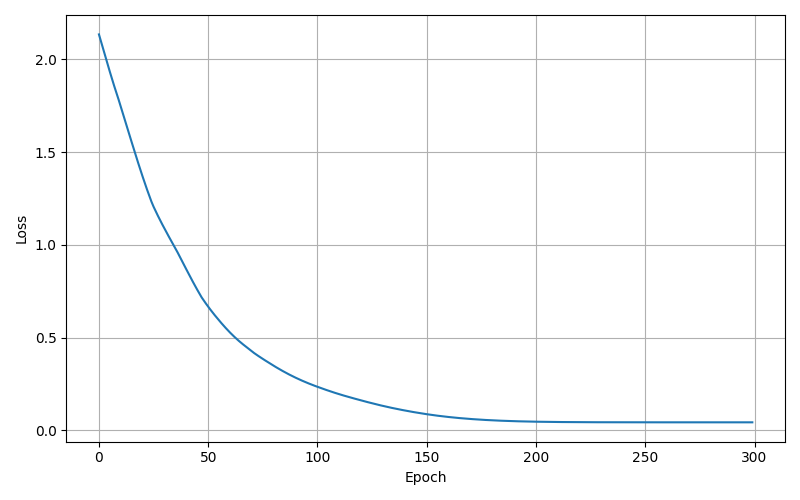}
    \caption{Training Loss in Dense Networks}
    \label{fig:test_train_loss}
\end{figure}

\begin{figure}
     \centering
     \begin{subfigure}[b]{0.6\textwidth}
         \centering
         \includegraphics[width=\textwidth]{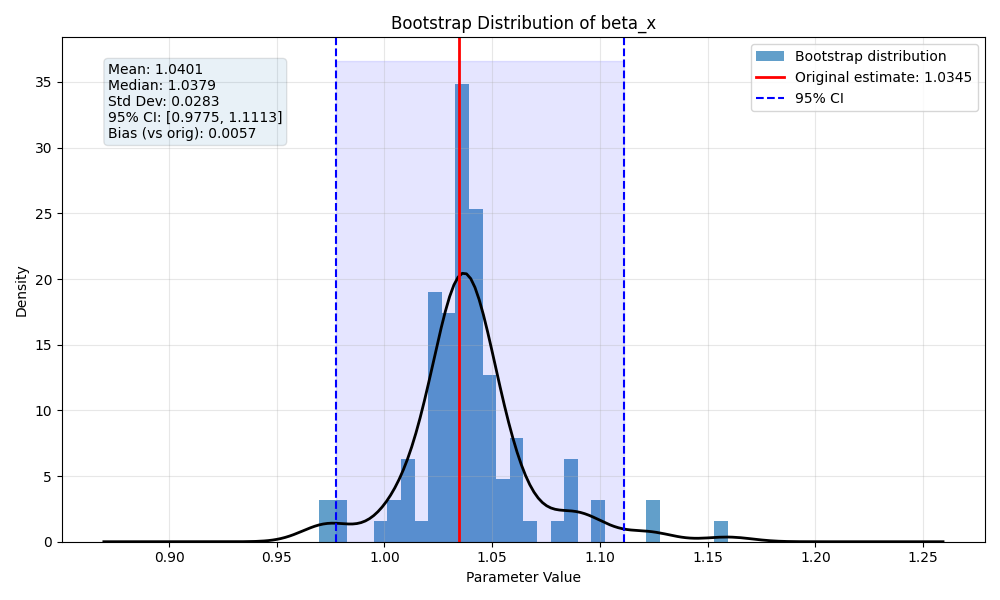}
         \caption{$\hat{\beta}_x$}
         \label{fig:beta_x_sparse}
     \end{subfigure}
     \hfill
     \begin{subfigure}[b]{0.6\textwidth}
         \centering
         \includegraphics[width=\textwidth]{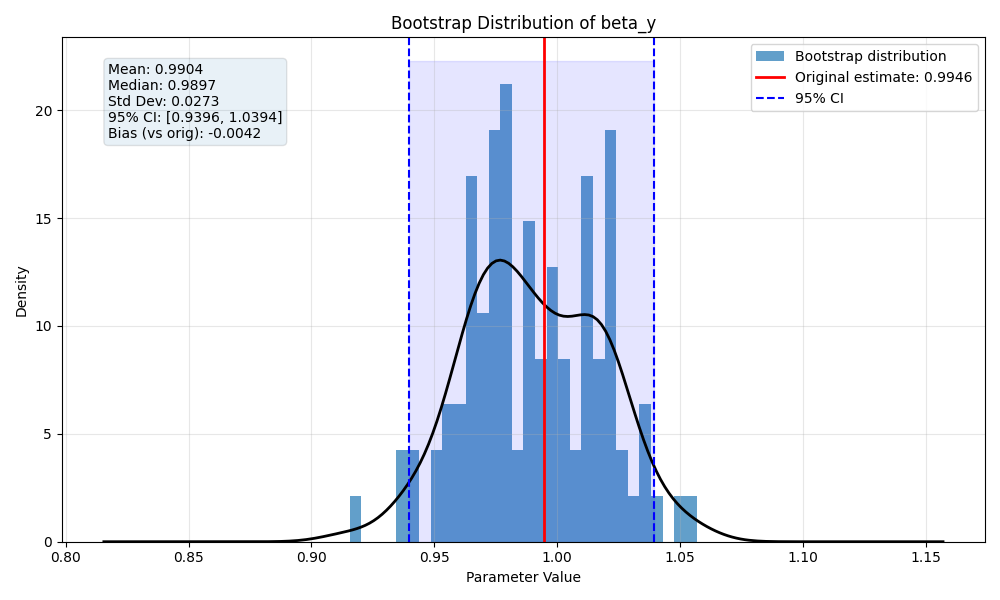}
         \caption{$\hat{\beta}_y$}
         \label{fig:beta_y_sparse}
     \end{subfigure}
     \hfill
     \begin{subfigure}[b]{0.6\textwidth}
         \centering
         \includegraphics[width=\textwidth]{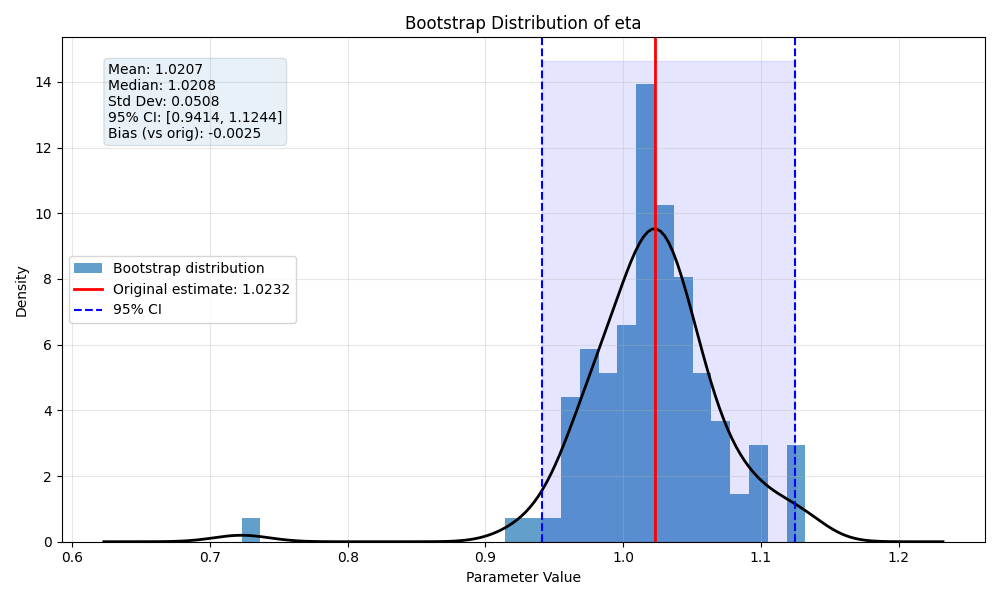}
         \caption{$\hat{\eta}$}
         \label{fig:eta_sparse}
     \end{subfigure}
        \caption{Estimated Parameters in Sparse Random Networks}
        \label{fig:est_sparse}
\end{figure}

\begin{figure}
     \centering
     \begin{subfigure}[b]{0.5\textwidth}
         \centering
         \includegraphics[width=\textwidth]{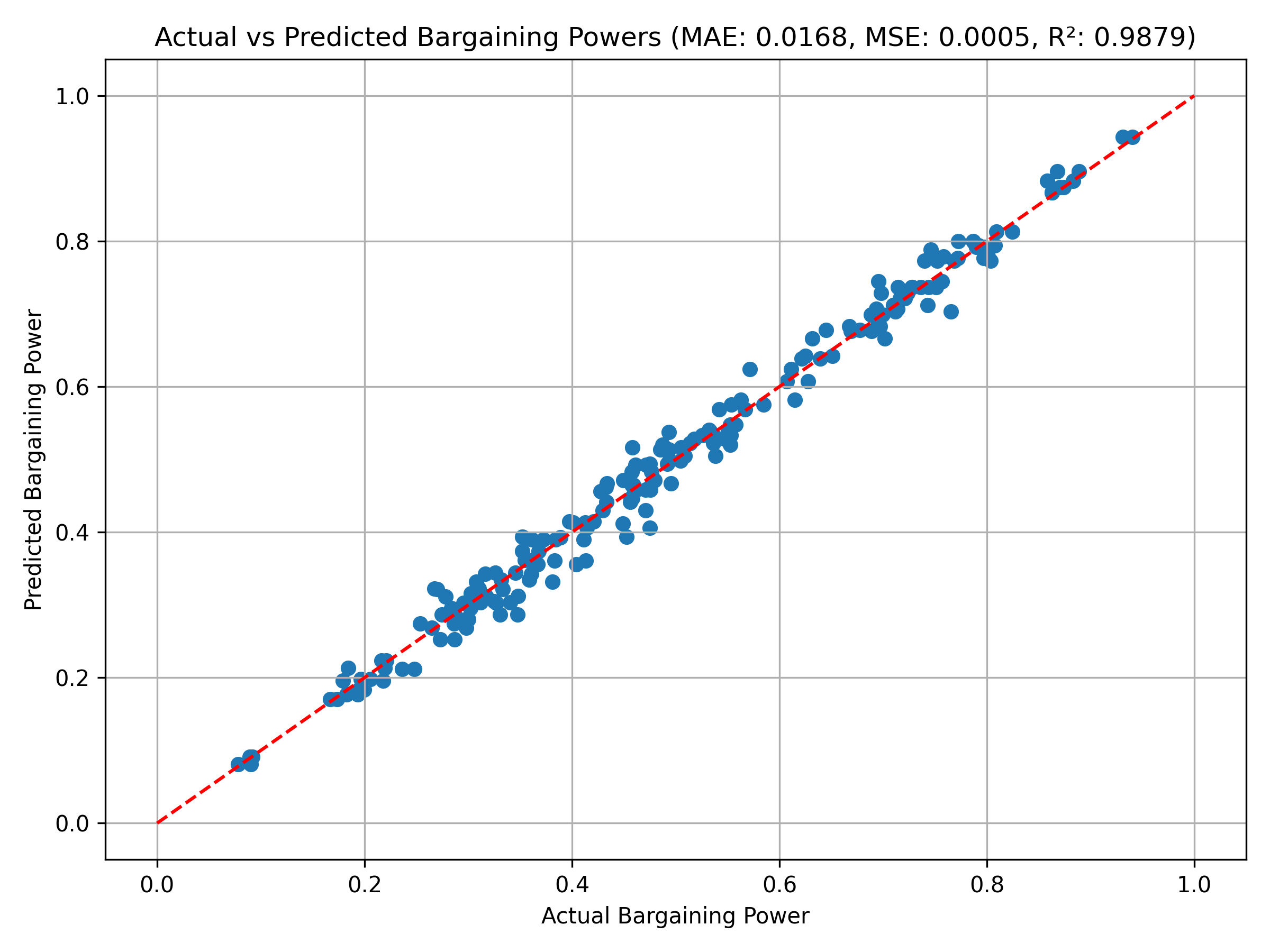}
         \caption{Bargaining Power}
         \label{fig:pi_sparse}
     \end{subfigure}%
     \begin{subfigure}[b]{0.5\textwidth}
         \centering
         \includegraphics[width=\textwidth]{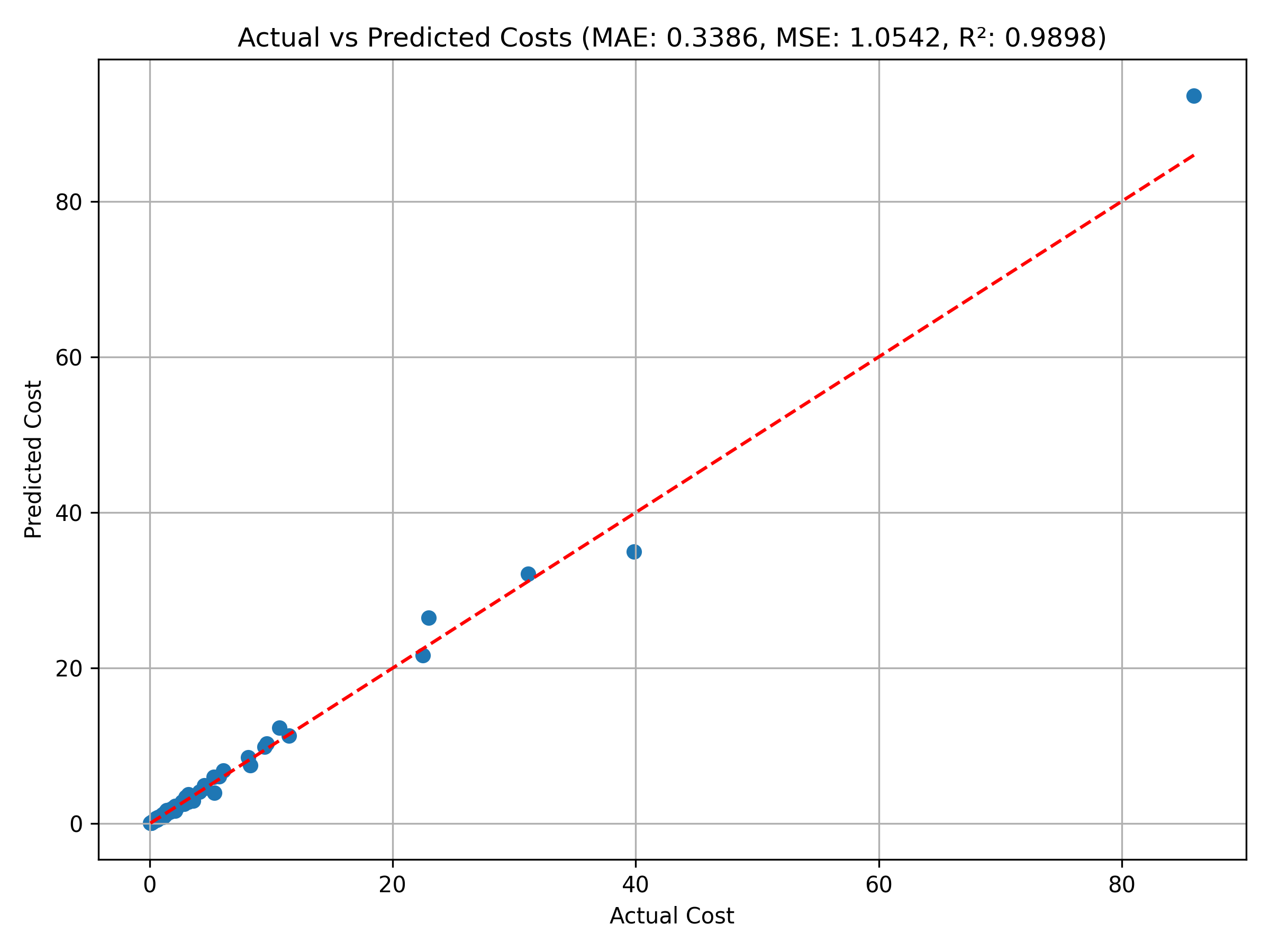}
         \caption{Holding Cost}
         \label{fig:c_sparse}
     \end{subfigure}
     \begin{subfigure}[b]{0.5\textwidth}
         \centering
         \includegraphics[width=\textwidth]{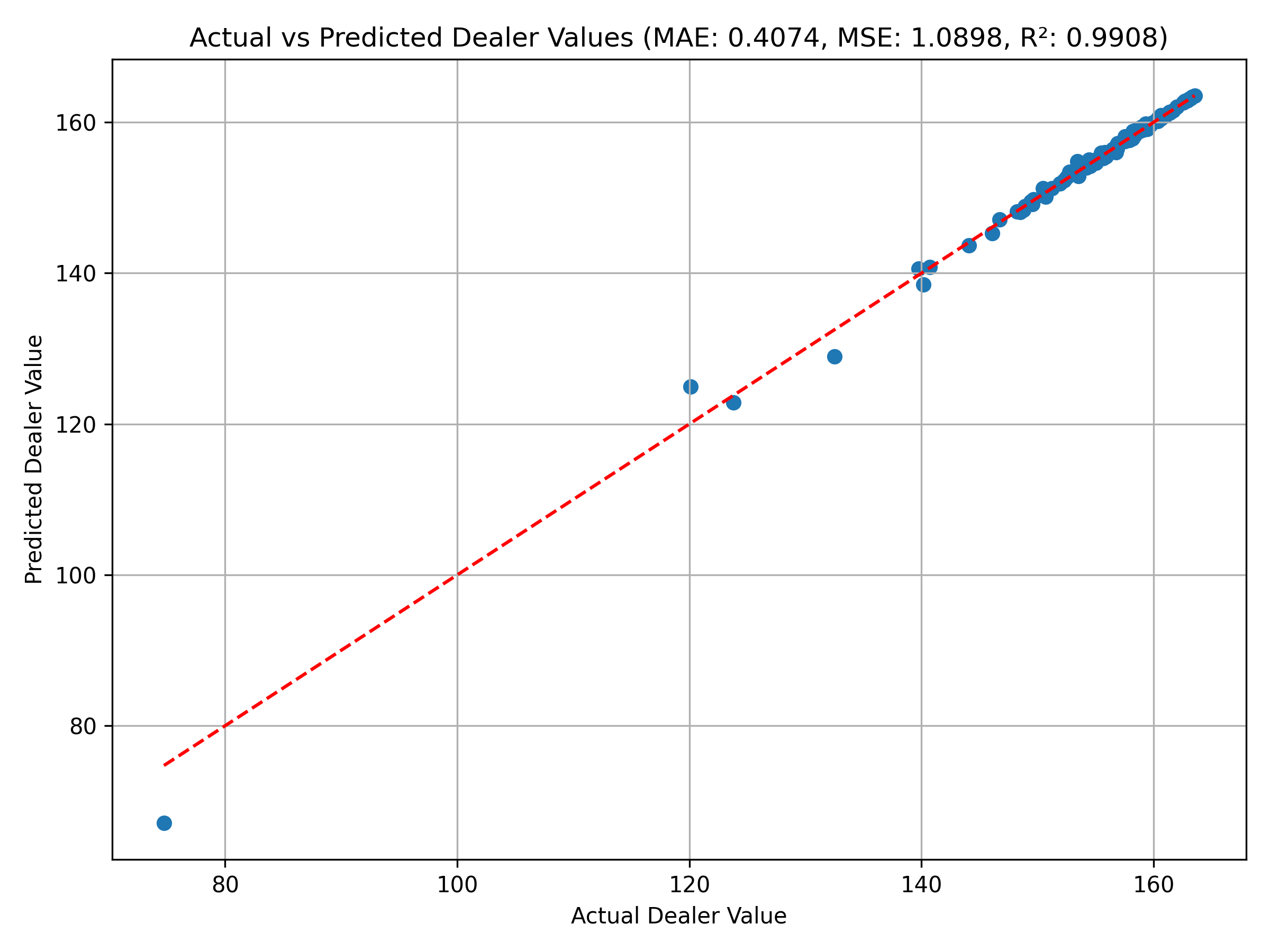}
         \caption{Dealer Values}
         \label{fig:v_sparse}
     \end{subfigure}%
          \begin{subfigure}[b]{0.5\textwidth}
         \centering
         \includegraphics[width=\textwidth]{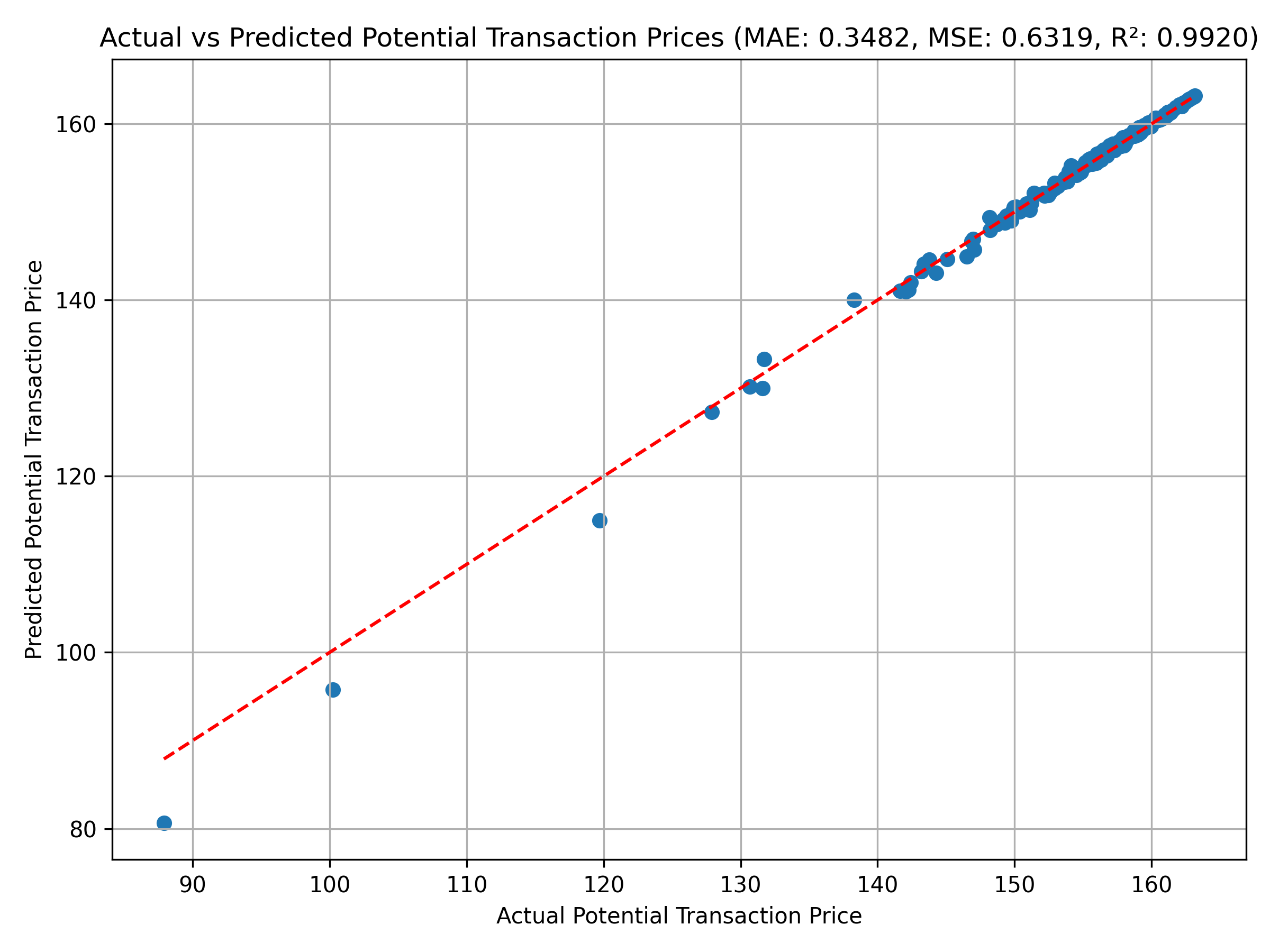}
         \caption{Potential Transaction Prices}
         \label{fig:p_sparse}
     \end{subfigure}
        \caption{Comparison of Predicted vs. Actual Latent Variables in Sparse Random Networks}
        \label{fig:est_latent_sparse}
\end{figure}

\begin{figure}
     \centering
     \begin{subfigure}[b]{0.6\textwidth}
         \centering
         \includegraphics[width=\textwidth]{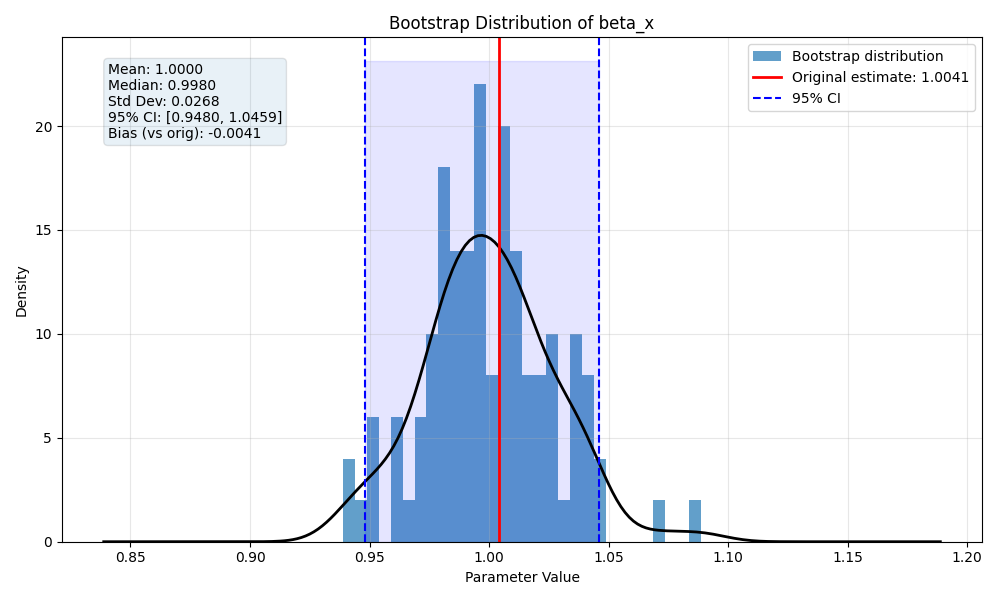}
         \caption{$\hat{\beta}_x$}
         \label{fig:beta_x_cp}
     \end{subfigure}
     \hfill
     \begin{subfigure}[b]{0.6\textwidth}
         \centering
         \includegraphics[width=\textwidth]{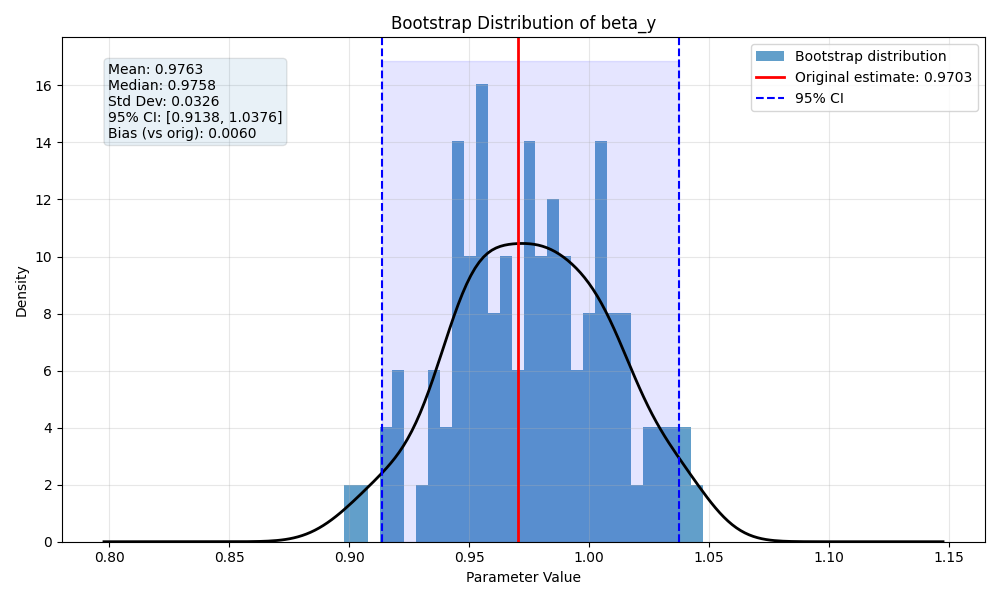}
         \caption{$\hat{\beta}_y$}
         \label{fig:beta_y_cp}
     \end{subfigure}
     \hfill
     \begin{subfigure}[b]{0.6\textwidth}
         \centering
         \includegraphics[width=\textwidth]{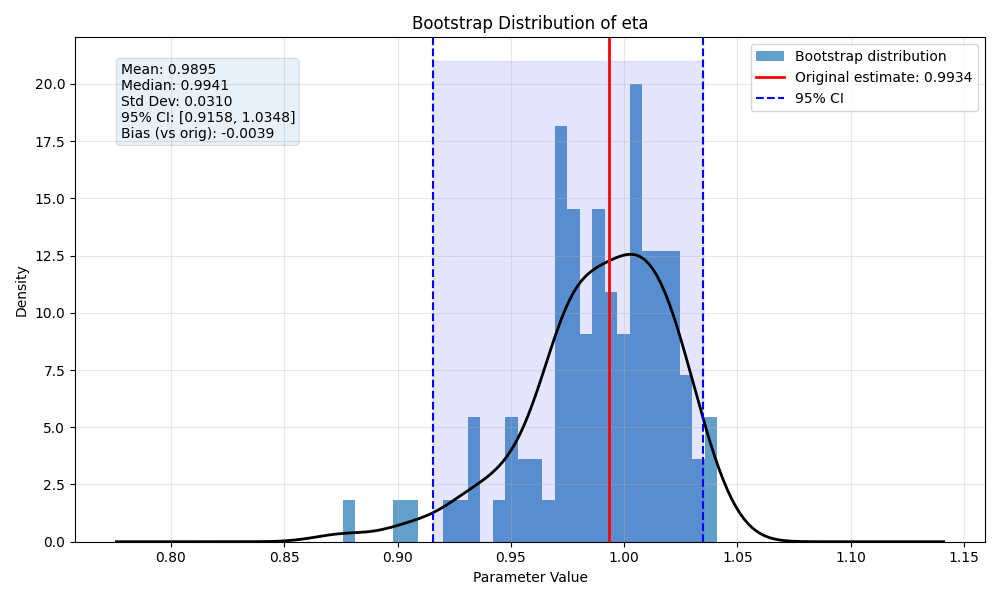}
         \caption{$\hat{\eta}$}
         \label{fig:eta_cp}
     \end{subfigure}
        \caption{Estimated Parameters in Core-periphery Networks}
        \label{fig:est_cp}
\end{figure}

\begin{figure}
     \centering
     \begin{subfigure}[b]{0.5\textwidth}
         \centering
         \includegraphics[width=\textwidth]{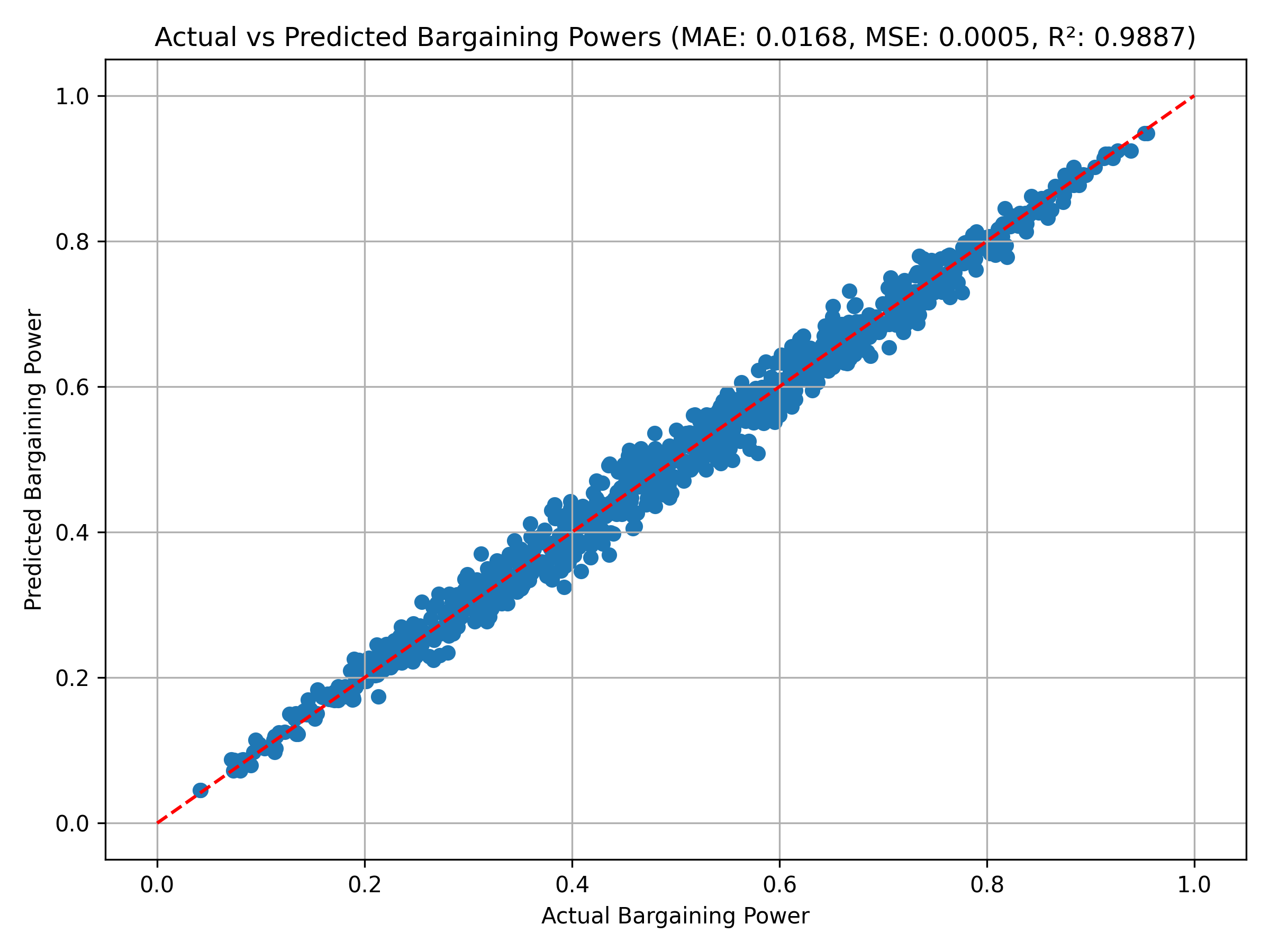}
         \caption{Bargaining Power}
         \label{fig:pi_cp}
     \end{subfigure}%
     \begin{subfigure}[b]{0.5\textwidth}
         \centering
         \includegraphics[width=\textwidth]{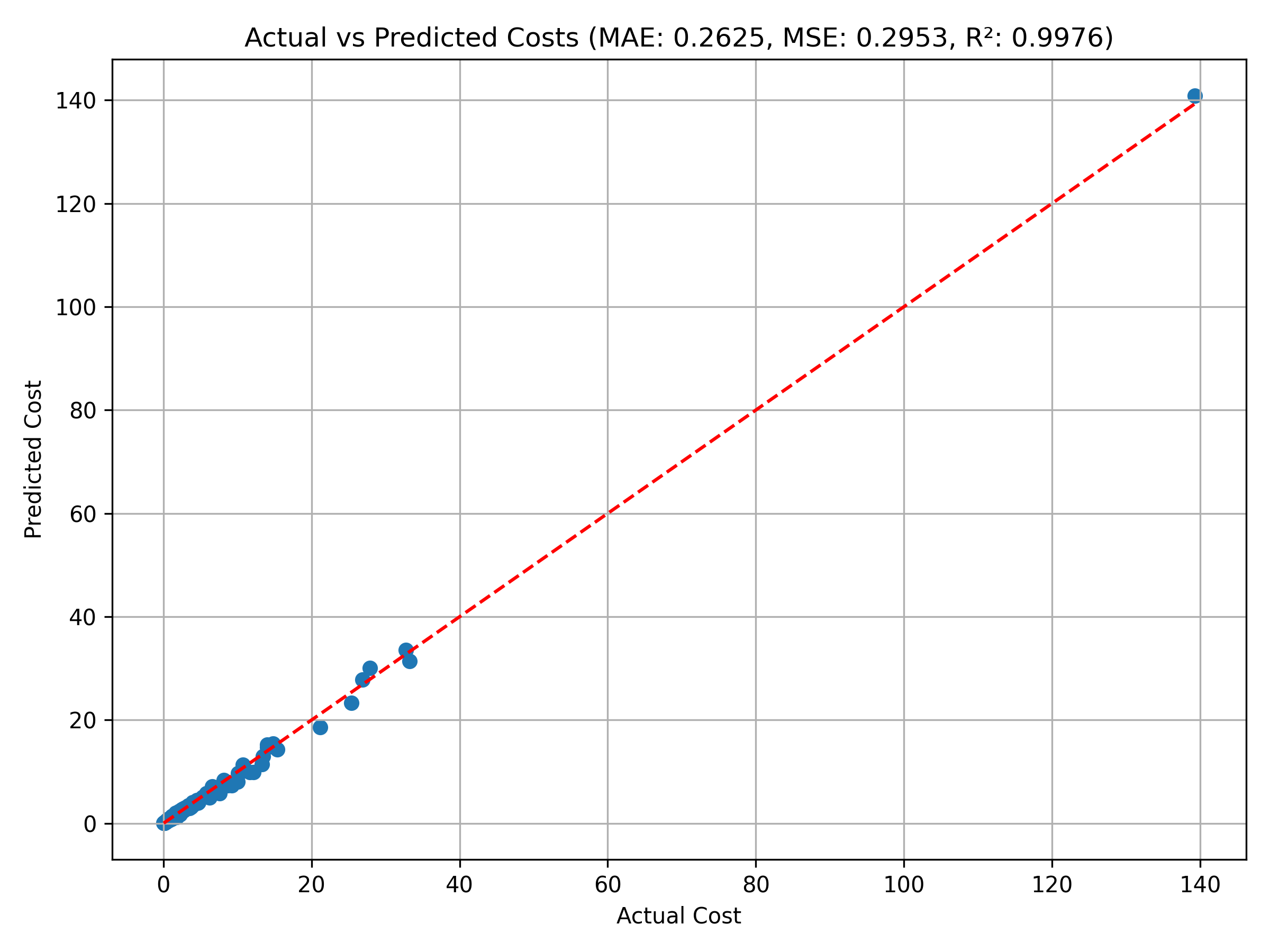}
         \caption{Holding Cost}
         \label{fig:c_cp}
     \end{subfigure}
     \begin{subfigure}[b]{0.5\textwidth}
         \centering
         \includegraphics[width=\textwidth]{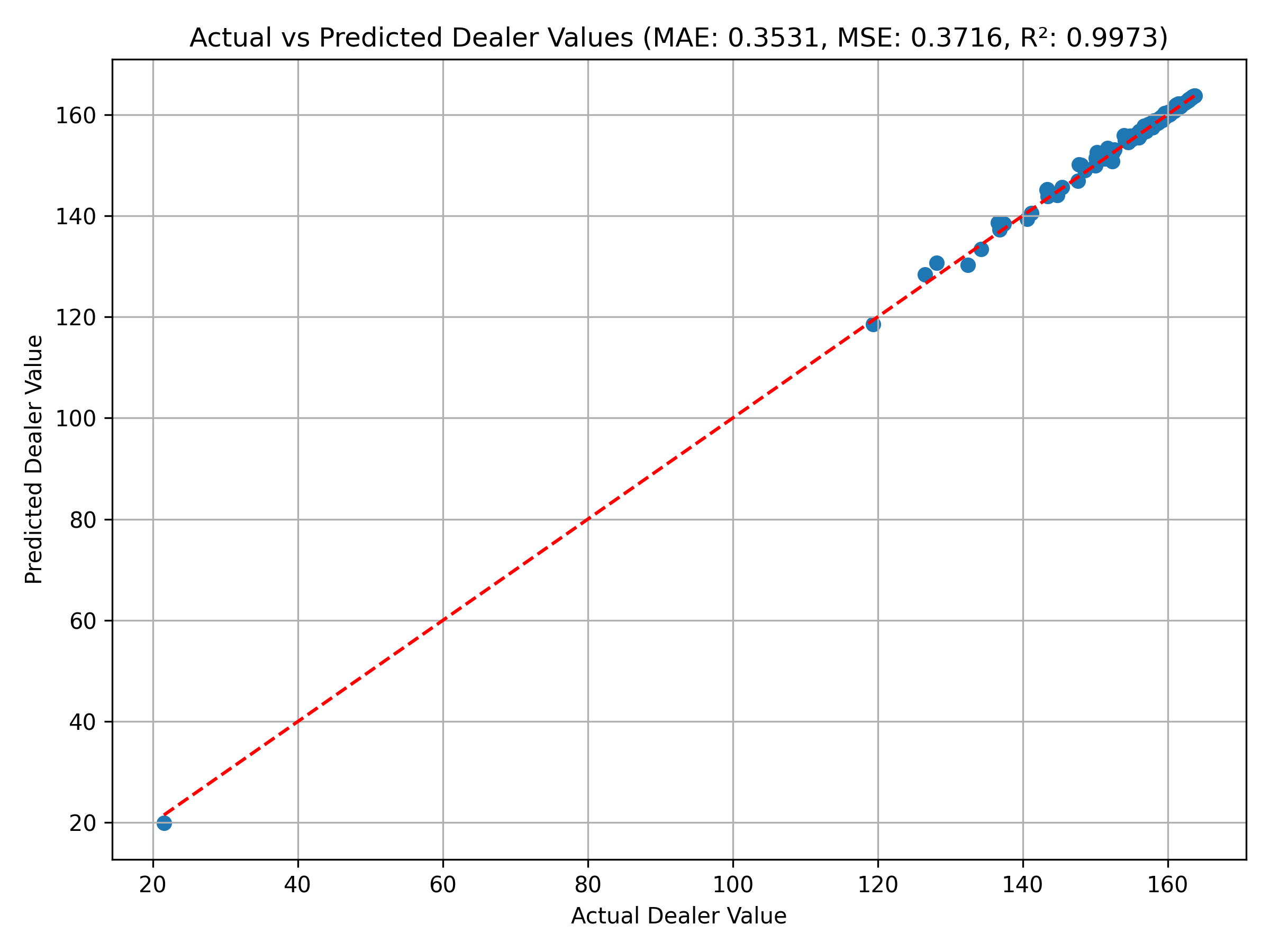}
         \caption{Dealer Values}
         \label{fig:v_cp}
     \end{subfigure}%
          \begin{subfigure}[b]{0.5\textwidth}
         \centering
         \includegraphics[width=\textwidth]{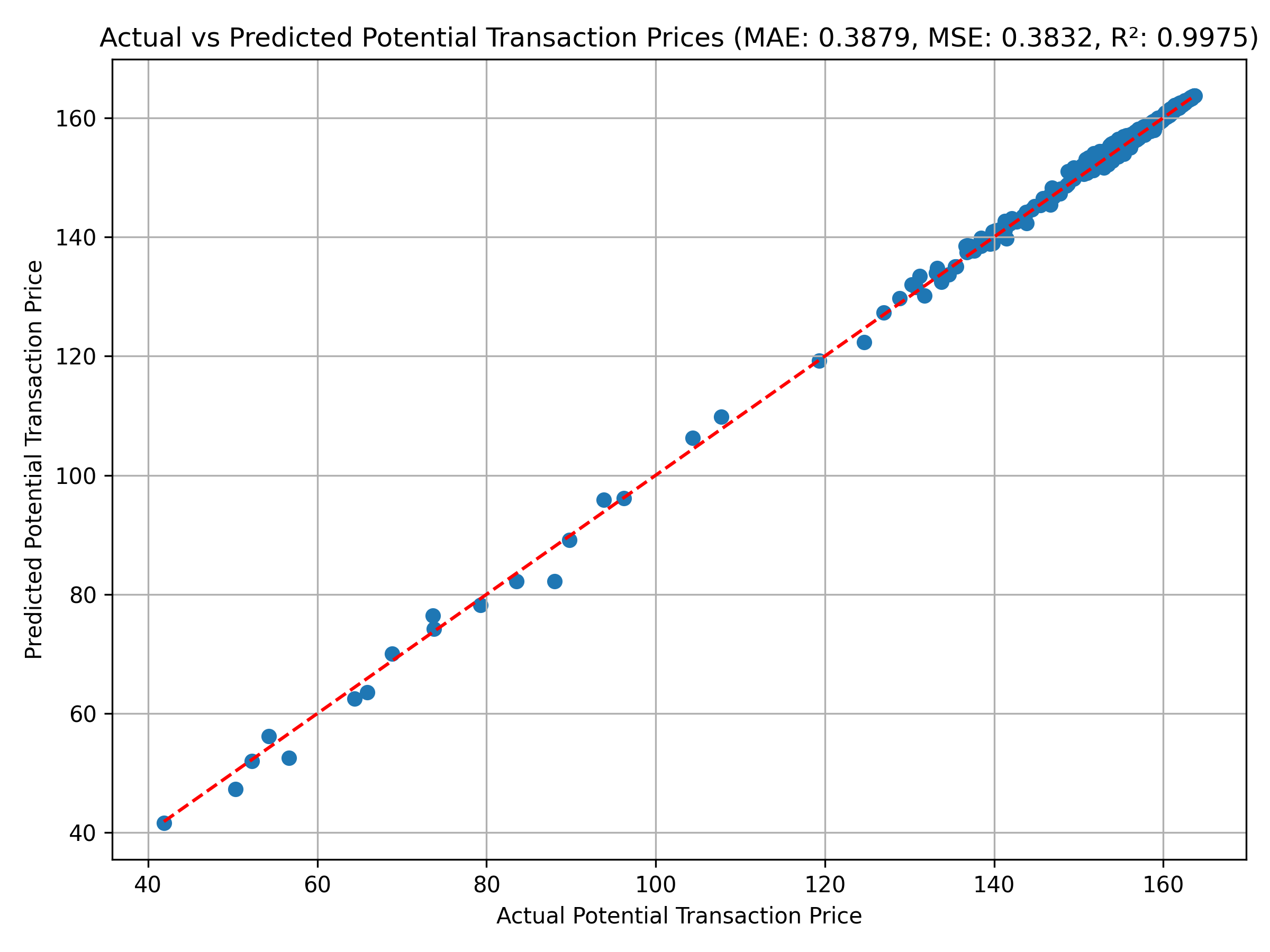}
         \caption{Potential Transaction Prices}
         \label{fig:p_cp}
     \end{subfigure}
        \caption{Comparison of Predicted vs. Actual Latent Variables in Core-periphery Networks}
        \label{fig:est_latent_cp}
\end{figure}


\end{document}